\let\oldnl\nl
\newcommand{\nonl}{\renewcommand{\nl}{\let\nl\oldnl}}
\newcommand{\qH}[1]{\qedhere~$_{\text{#1}}$}
\newenvironment{proofsketch}{%
  \renewcommand{\proofname}{Proof Sketch}\proof}{\endproof}
\newtheorem{theorem}{Theorem}
\newtheorem{assumption}[theorem]{Assumption}
\newtheorem{definition}[theorem]{Definition}
\newtheorem{observation}[theorem]{Observation}
\newtheorem{corollary}[theorem]{Corollary}
\newtheorem{lemma}[theorem]{Lemma}
\newtheorem*{claim*}{Claim}
\newenvironment{claim}[1]{
    
    \claiminner
}{\endclaiminner}
\def\indentation{2em}
\newtheoremstyle{indented}
  {.5em}
  {.5em}
  {\addtolength{\@totalleftmargin}{\indentation}
   \addtolength{\linewidth}{-\indentation}
   \parshape 1 \indentation \linewidth}
  {}
  {\bfseries}
  {.}
  {.5em}
  {}
\theoremstyle{indented}
\newenvironment{claimindented}[1]{
    
    \claimindentedinner
}{\endclaimindentedinner}
\newenvironment{proofindented}[1][\proofname]{\par
  \pushQED{\qed}%
  \normalfont \topsep6\p@\@plus6\p@\relax
  \list{}{\leftmargin=\indentation
          \rightmargin=0em
          \settowidth{\itemindent}{\itshape#1}%
          \labelwidth=\itemindent
          \parsep=0pt \listparindent=\parindent 
  }
  \item[\hskip\labelsep
        \itshape
    #1\@addpunct{.}]\ignorespaces
}{%
  \popQED\endlist\@endpefalse
}
\newcommand*\circled[1]{\tikz[baseline=(char.base)]{
            \node[shape=circle,draw,inner sep=2pt, solid] (char) {#1};}}
\newcommand{\CAS}{compare-and-swap}
\newcommand{\GCAS}{generalized-compare-and-swap}
\newcommand{\FA}{fetch-and-increment}
\newcommand{\clockobject}{C}
\newcommand{\announceobject}{A}
\newcommand{\stateobject}{S}
\newcommand{\helpobject}[1]{R_{#1}}
\newcommand{\helpobjectsymbol}{h}
\newcommand{\old}{val_1}
\newcommand{\new}{val_2}
\title{Generalized Compare and Swap}
\author{Vassos Hadzilacos \and Myles Thiessen \and Sam Toueg}
\begin{document}

\maketitle

\begin{abstract}
    In this paper, we first propose a natural generalization of the well-known \CAS{} object, one that replaces the equality comparison with an arbitrary comparator.
    We then present a simple wait-free universal construction using this object and prove its correctness.
\end{abstract}

\section{Introduction}
\label{sec:introduction}

\emph{Compare-and-swap} (CAS) is a simple base object, often supported in hardware, that is commonly used in shared memory systems to implement other objects that are not supported in hardware~\cite{fatourou2009redblue, herlihy1991wait, fatourou2011highly, anderson1995universal, chuong2010universal, afek1995wait, naderibeni2023wait, michael1996simple}.
Each CAS object $O$ stores a single value and supports the following two operations: \emph{Read($O$)} which returns the current value of $O$, and \emph{CAS($O$, $\old$, $\new$)} which compares the current value of $O$ to $\old$ and, \mbox{if they are \emph{equal},} replaces the current value of $O$ with $\new$ as shown below.

\begin{algorithm}[H]
\DontPrintSemicolon

\Init(\text{CAS($O$: object, $\old$: value, $\new$: value)}){

    $current \coloneqq O$

    \If{$current = \old$}{
        $O \coloneqq \new$

        \textbf{return} true
    }

    \textbf{return} false

}

\label{alg:cas}
\end{algorithm}

In this paper, we propose a natural generalization of CAS called \mbox{\emph{\GCAS{}}~(GCAS)}: GCAS replaces the equality comparison of CAS with an arbitrary comparator as shown below.

\begin{algorithm}[H]
\DontPrintSemicolon

\Init(\text{GCAS($comparator:$ value $\times$ value $\to$ boolean, $O$: object, $\old$: value, $\new$: value)}){

    $current \coloneqq O$

    \If{$comparator(current, \old)$}{
        $O \coloneqq \new$

        \textbf{return} true
    }

    \textbf{return} false

}

\label{alg:gcas}
\end{algorithm}

Note that while the above definition of GCAS supports an arbitrary comparator, a restricted version of GCAS that only supports the usual comparators ($=$, $>$, $<$, $\geq$, $\leq$) may be easier to support in hardware: the cost of doing an inequality comparison in GCAS should be similar to the cost of doing an equality comparison in CAS (which is currently supported in hardware).

To illustrate the use of GCAS, we present a simple \emph{wait-free universal construction}~\cite{herlihy1991wait} from GCAS and \emph{\FA{}} (F\&I) objects; this universal construction works in a system with an unbounded number of asynchronous processes.
In a nutshell, the \FA{} object is used to timestamp operations, and the GCAS uses the $>$ comparator to atomically compare these timestamps so that the operation with the lowest timestamp is eventually executed (see \Cref{sec:universal_construction} for more details).

\section{Model Sketch}
\label{sec:model}

We consider an infinite arrival distributed system~\cite{merritt2000computing} where possibly infinitely many asynchronous processes that may fail by crashing communicate via shared objects such as \GCAS{} and \FA{}.
In such systems, shared objects can be used to \emph{implement} other shared objects such that the implemented objects are linearizable~\cite{herlihy1990linearizability} and wait-free~\cite{herlihy1991wait}.

In an implementation of an object $O$, each process $p$ executes \emph{steps} of three kinds: the invocation by $p$ of an operation $o$ on $O$ (we call $p$ the \emph{owner} of $o$); the receipt by $p$ of a response $r$ from $O$; and the \emph{low-level steps} required by the implementation to perform $o$ such as performing operations on the base \GCAS{} and \FA{} objects.
An \emph{implementation history} $I$ of $O$ is a sequence of invocation, response, and low-level steps such that for each process $p$, the subsequence of $I$ involving only the steps of $p$ follows the following alternating pattern.
An invocation step for some operation $o$, some number of low-level steps required by the implementation to perform $o$, and a response step for $o$.
For each process $p$, this pattern always begins with an invocation step but may end with either an invocation, response, or low-level step.
A \emph{history} $H$ of $O$ is the subsequence of an implementation $I$ of $O$ that only contains invocation and response steps.
Successive invocation and response steps in this subsequence are called \emph{matching.}
An \emph{operation execution} $opx$ in a history $H$ of an object is either a pair consisting of an invocation and its matching response in $H$, in which case we say that $opx$ is complete in $H$; or an invocation in $H$ that has no matching response in $H$, in which case we say that $opx$ is \emph{incomplete} or \emph{pending} in $H$.
A history $H$ is \emph{complete} if all operation executions in $H$ are complete.
A \emph{completion} of $H$ is a history $H'$ formed by removing invocations or adding responses to every pending operation execution in $H$ such that $H'$ is complete.

Each object has a type that specifics how the object behaves when it is accessed sequentially.
Formally, an \emph{object of type $\mathcal{T}$} is specified by a tuple $(OP, RES, Q, \delta)$, where $OP$ is a set of operations, $RES$ is a set of responses, $Q$ is a set of states, and $\delta \subseteq Q \times OP \times Q \times RES$ is a state transition relation.
A tuple $(s, o, s', r)$ in $\delta$ means that if type $\mathcal{T}$ is in state $s$ when operation $o \in OP$ is invoked, then $\mathcal{T}$ can change its state to $s'$ and return the response $r$.
Note that $\delta$ is a relation as opposed to a function to capture non-determinism.

When an implementation of an object $O$ of type $\mathcal{T}$ is accessed concurrently, its behavior should be \emph{linearizable with respect to $\mathcal{T}$:} every operation on $O$ must appear to take effect instantaneously, at some point during the operation's execution interval, according to type $\mathcal{T}$.
More precisely, a \emph{linearization $L$ of a complete history $H$ of an object} is an assignment of a distinct \emph{linearization point $L(opx)$} to every operation execution $opx$ in $H$ such that $L(opx)$ is within the execution interval of $opx$ in $H$.
A linearization $L$ of a complete history $H$ of an object \emph{conforms} to type $\mathcal{T}$, if the operation responses in $H$ could be those received when applying these operations sequentially, in the order dictated by $L$, on an object of type $\mathcal{T}$.
A history $H$ of an object is \emph{linearizable with respect to a type $\mathcal{T}$} if $H$ has a completion $H'$ and a linearization $L'$ of $H'$ that conforms to type $\mathcal{T}$.
An object $O$ is \emph{linearizable with respect to type $\mathcal{T}$} if every history $H$ of $O$ is linearizable with respect to type $\mathcal{T}$.

\section{A Simple Wait Free Universal Construction From GCAS}
\label{sec:universal_construction}

\Cref{alg:wait-free-simple} is a \emph{simple} implementation of an object $O$ of type $\mathcal{T}$ from \GCAS{} and \FA{} objects.
Before a process $p$ invokes its first operation, it obtains from the memory management system a single \GCAS{} object $\helpobject{p}$.
This object has two fields: one stores the timestamp of the operation on $O$ that $p$ wishes to execute, and the other eventually stores the response of that operation.
The \emph{response object} $\helpobject{p}$ is not known a prior to processes other than $p$.

In addition, the implementation uses three shared objects that are known a prior to any process that wishes to apply operations to $O$.
These three objects are:
\begin{enumerate}
    \item $\clockobject{}$: A \FA{} object used to timestamp operations; $\clockobject{}$ stands for \emph{clock}.
    \item $\announceobject{}$: A \GCAS{} object that stores information about the operation $o$ to be linearized next; $\announceobject{}$ stands for \emph{announce}. It has three fields: the timestamp of $o$, the operation $o$, and a pointer to the owner of $o$'s response object $\helpobject{}$.
    \item $\stateobject{}$: A \GCAS{} object that stores information about the state of the implemented object $O$; $\stateobject{}$ stands for \emph{state}. It has four fields: the timestamp of the last operation $o$ that resulted in the current state, the state of the object, the response of applying $o$, and a pointer to the owner of $o$'s response object $\helpobject{}$.
\end{enumerate}

To implement an object $O$ of type $\mathcal{T}$ we assume to be given the state transition relation $\delta$ of $\mathcal{T}$ in the form of a procedure $apply_{\mathcal{T}}(o, s)$.
This procedure takes an operation $o$ and current state $s$ as input and returns some $(s', r)$ as output, such that $(s, o, s', r) \in \delta$.
For convenience, we assume:

\begin{assumption}\label{assumption:1}
    $NULL$ is a value that differs from all possible responses to all operations of type $\mathcal{T}$ and $NOOP$ is a name that differs from the names of all operations of $\mathcal{T}$.
\end{assumption}

\begin{algorithm}[t]
\DontPrintSemicolon

\nonl\textbf{Shared Object Per Participating Process $p$:}

\nonl\nonl\Init($\helpobject{p}:$ A \GCAS{} object with two fields:){
    \nonl$time:$ the timestamp of this process' last invoked operation.

    \nonl$response:$ the response of applying this operation or $NULL$.
}

\BlankLine

\nonl\textbf{Shared Objects:}

\nonl$\clockobject{}:$ A \FA{} object, initially 1.

\nonl\Init($\announceobject{}:$ A \GCAS{} object with three fields:){
    \nonl$time:$ the timestamp of $o$, initially 0.

    \nonl$o:$ the operation to execute, initially $NOOP$.

    \nonl$response\_object\_pointer:$ a pointer to the owner of $o$'s response object, initially $\helpobjectsymbol{}(NOOP)$ which is a pointer to an immutable location that contains $(0, \perp)$.
}

\nonl\Init($\stateobject{}:$ A \GCAS{} object with four fields:){
    \nonl$time:$ the timestamp of the operation $o$ that resulted in the current object state, initially 0.

    \nonl$state:$ the current object state, initially the starting state $s_0$ of the type $\mathcal{T}$.

    \nonl$response:$ the response of applying $o$ resulting in the current object state, initially $\perp$.

    \nonl$response\_object\_pointer:$ a pointer to the owner of $o$'s response object, initially $\helpobjectsymbol{}(NOOP)$.
}

\BlankLine

\nonl\textbf{Code for Process $p$:}

\BlankLine

\Init(\textbf{procedure} \text{DoOp($o$)}\label{line:op_start}){
    $t \coloneqq F\&I(\clockobject{})$\label{line:time_assignment}\tcc*[f]{get a timestamp $t$ for $o$}

    $\helpobject{p} \coloneqq (t, NULL)$\label{line:reset_help_struct}\tcc*[f]{initialize $o$'s response to $NULL$}

    \While(\tcc*[f]{while $o$ is not done}){$\helpobject{p} = (t, NULL)$}{\label{line:loop_start}
        $(t^*, s^*, r^*, roptr^*) \coloneqq \stateobject{}$\label{line:g_r_query}\tcc*[f]{read implemented object state}
        
        GCAS($=$, $roptr^*$, $(t^*, NULL)$, $(t^*, r^*)$)\label{line:help_pointer_cas}\tcc*[f]{copy response into response object}
    
        GCAS($>$, $\announceobject{}$, $(t, -, -)$, $(t, o, \&\helpobject{p})$)\label{line:g_a_gcas}\tcc*[f]{if $o$ has higher priority announce it}
    
        $(t', o', roptr') \coloneqq \announceobject{}$\label{line:g_a_query}\tcc*[f]{read operation $o'$ to be helped}
    
        $(\hat{t}, \hat{r}) \coloneqq (*roptr')$\label{line:help_pointer_query}\tcc*[f]{read response of $o'$}

        \If(\tcc*[f]{if $o'$ is not done}){$(\hat{t}, \hat{r}) = (t', NULL)$}{\label{line:help}
            $(s', r') \coloneqq apply_{\mathcal{T}}(o', s^*)$\label{line:apply}\tcc*[f]{apply $o'$ to object state}

            GCAS($=$, $\stateobject{}$, $(t^*, s^*, r^*, roptr^*)$, $(t', s', r', roptr')$)\label{line:g_r_cas}\tcc*[f]{try to linearize $o'$}
        }
        \Else(\tcc*[f]{if $o'$ is done}){
            GCAS($=$, $\announceobject{}$, $(t', o', roptr')$, $(t, o, \&\helpobject{p})$)\label{line:g_a_cas}\tcc*[f]{and $o'$ is still in $\announceobject{}$ announce $o$}
        }
    }

    \textbf{return} $\helpobject{p}.response$\label{line:op_done}
}

\caption{A Simple Implementation of an Object of Type $\mathcal{T}$ from GCAS and F\&I objects.}
\label{alg:wait-free-simple}
\end{algorithm}

We first explain the algorithm at a high level and then provide a line-by-line description.
To perform an operation $o$, a process $p$ does the following:
\begin{enumerate}
    \item $p$ gets a timestamp $t$ from the clock object $\clockobject{}$ for $o$.

    \item While $o$ is not done:
    \begin{enumerate}
        \item $p$ reads the current state $s$ of the object from the state object $\stateobject{}$.

        \item $p$ tries to put $o$ in the announce object $\announceobject{}$ using a GCAS($>, \ldots$) operation. This GCAS compares the timestamp $t$ of $o$ and the timestamp $t^*$ of the operation $o^*$ currently in the announce object. If $t < t^*$ then it replaces $o^*$ with $o$.

        \item Irrespective of whether this GCAS was successful (i.e., whether $p$ replaced $o^*$ with $o$), $p$ now helps to execute whatever operation is currently in the announce object. To do so:
        \begin{enumerate}
            \item $p$ reads the operation in the announce object; say it is $o'$.

            \item If $o'$ is not done, then $p$ first applies the operation $o'$ to the state $s$ of the object using the $apply_{\mathcal{T}}$ procedure (recall that $p$ read $s$ in step (a)) to determine what the new state $s'$ should be.
            It then uses a GCAS($=, \ldots$) to try and replace $s$ with $s'$ in the state object.

            \item If $o'$ is done, then $p$ tries to replace $o'$ with $o$ in the announce object by using a GCAS($=, \ldots$) operation. 
        \end{enumerate}
    \end{enumerate}
\end{enumerate}

It remains to explain how processes determine if an operation is done, and how a process learns the response of its own operation (which may have been previously executed by a ``helper'').
Process $p$ determines whether an operation $o$ with timestamp $t$ owned by a process $q$ is done as follows: $p$ reads the response object $\helpobject{q}$ of $q$; say it gets $(t',r')$.
$p$ knows that $o$ is done if either $t' = t$ and $r' \neq NULL$, or $t < t'$ (in which case $o$ is done because $q$ must have completed $o$ and then invoked an operation with a greater timestamp $t'$).
To get the response of its operation $p$ just reads the response field of its response object.
The response object of $p$ is maintained as follows.
When $p$ invokes an operation $o$ it sets its response object $\helpobject{p}$ to $(t,NULL)$ where $t$ is the timestamp of $o$.
Whenever a process tries to linearize an operation, it first copies the response of the \emph{last} linearized operation into the response object of the owner of that operation.

We now describe the algorithm line by line.
To perform an operation $o$ on $O$ process $p$ invokes DoOp($o$) on \cref{line:op_start}.
$p$ starts by getting a timestamp $t$ for $o$ by performing a \FA{} operation on $C$ on \cref{line:time_assignment}.
Then, $p$ sets its response object $\helpobject{p}$ to $(t, NULL)$ on \cref{line:reset_help_struct}.
From here onward, while $o$ is not done, $p$ tries to insert $o$ in $\announceobject{}$ and linearize the operation stored in $\announceobject{}$.
To do so, $p$ first reads the current contents $(t^*, s^*, r^*, roptr^*)$ from $\stateobject{}$ on \cref{line:g_r_query}.
$p$ then tries to copy $o^*$'s response (the operation that resulted in the state $s^*$) to the owner of $o^*$'s response object by performing a GCAS on $roptr^*$ on \cref{line:help_pointer_cas}.
Specifically it executes GCAS($=$, $roptr^*$, $(t^*, NULL)$, $(t^*, r^*)$) which will set the owner of $o^*$'s response object equal to $(t^*, r^*)$ only if it is still equal to $(t^*, NULL)$.
$p$ then tries to insert $o$ in $\announceobject{}$ on \cref{line:g_a_gcas} by comparing the current timestamp stored $\announceobject{}$ to the timestamp of $o$.
It does so by executing GCAS($>$, $\announceobject{}$, $(t, -, -)$, $(t, o, \&\helpobject{})$) which sets $\announceobject{}$ equal to $(t, o, \&\helpobject{})$ only if $o$'s timestamp is higher then the current timestamp of the operation stored in $\announceobject{}$.
$p$ then tries to linearize the operation stored in $\announceobject{}$.
It starts by reading the triplet $(t', o', roptr')$ from $\announceobject{}$ on \cref{line:g_a_query}.
To see whether $o'$ is not yet done, $p$ first reads the contents of $roptr'$ on \cref{line:help_pointer_query}, say $(\hat{t}, \hat{r})$, and then checks whether $(\hat{t}, \hat{r}) = (t', NULL)$ on \cref{line:help}.
If this condition is true, $o'$ is not done.
In this case, $p$ applies $o'$ to the current state $s^*$ of $O$ (which it read on \cref{line:g_r_query}) by executing $(s', r') \coloneqq apply_{\mathcal{T}}(o', s^*)$ on \cref{line:apply}.
$p$ then tries to change the current state of $O$ from $s^*$ to $s'$ by performing a GCAS on $\stateobject{}$ on \cref{line:g_r_cas}.
Specifically it executes GCAS($=$, $\stateobject{}$, $(t^*, s^*, r^*, roptr^*)$, $(t', s', r', roptr')$).
Note that $p$ read $(t^*, s^*, r^*, roptr^*)$ from $\stateobject{}$ on \cref{line:g_r_query}, $p$ read $t'$ and $roptr'$ from $\announceobject{}$ on \cref{line:g_a_query}, and $p$ got $s'$ and $r'$ by applying $o'$ to $s^*$ on \cref{line:apply}.
In the other case, i.e., when $(\hat{t}, \hat{r}) \neq (t', NULL)$ on \cref{line:help}, then $o'$ is done.
If so, $p$ tries to insert $o$ in $\announceobject{}$.
It does so by executing GCAS($=$, $\announceobject{}$, $(t', o', roptr')$, $(t, o, \&\helpobject{})$) on \cref{line:g_a_cas}.
Note that $p$ read $(t', o', roptr')$ from $\announceobject{}$ on \cref{line:g_a_query}.

\section{Definitions and Basic Facts}

Before we prove that \Cref{alg:wait-free-simple} is wait-free and linearizable we begin with some definitions and basic facts.
Throughout our proof, we assume an arbitrary implementation history $I$ of \Cref{alg:wait-free-simple}.
We use the notion of time to refer to how two steps occur within $I$.
Specifically, if steps $s_1$ and $s_2$ occur at $t_1$ and $t_2$ respectively such that $t_1 < t_2$ then $s_1$ comes before $s_2$ in $I$.
Finally, for brevity, we use the term operation to refer to an operation execution and execution of line $X$ to refer to a low-level step of line $X$ by some process in $I$.

\begin{definition}
    The invocation and response steps for an operation $o$ are \cref{line:op_start} and \cref{line:op_done}, respectively.
\end{definition}

\begin{definition}\label{def:op_state}
    For an operation $o \neq NOOP$: $p(o)$ denotes the process that executed $o$; $t(o)$ is equal to the response of the F\&I operation executed by $p(o)$ on \cref{line:time_assignment} in DoOp($o$) or $\infty$ if $p(o)$ has never executed \cref{line:time_assignment} in DoOp($o$); and $\helpobjectsymbol{}(o)$ denotes the response object $\helpobject{}$ of $p(o)$. For the special case of the $NOOP$ operation, $p(NOOP)$ is undefined, $t(NOOP) = 0$, and $\helpobjectsymbol{}(NOOP)$ points to an immutable location that contains $(0, \perp)$.
\end{definition}

\begin{definition}[Complete at $T$]\label{def:complete}
    An operation $o$ is \emph{complete at time $T$} if there exists a time $T' \leq T$ such that $p(o)$ executed \cref{line:op_done} at $T'$ within DoOp($o$).
\end{definition}

\begin{definition}[Done at $T$]\label{def:done}
    An operation $o$ is \emph{done at time $T$} if there exists a time $T' \leq T$ where $\helpobjectsymbol{}(o) = (t(o), r)$ at $T'$ such that $r \neq NULL$.
\end{definition}

We start with two simple observations about operations in general.
An immediate consequence of \Cref{def:done} is

\begin{observation}\label{observation:if_once_done_then_always_done}
    If operation $o$ is done at time $T$ then for all times $T' \geq T$ $o$ is done at $T'$.
\end{observation}

Since timestamp assignment is done using an atomic \FA{} (\cref{line:time_assignment}) that is initially 1, each operation is assigned a unique timestamp greater than or equal to 1. Hence

\begin{observation}\label{observation:unique_timestamps}
    For all operations $o$ and $o'$, if $t(o) \neq \infty$ and $t(o') \neq \infty$ then $o \neq o'$ if and only if $t(o) \neq t(o')$.
\end{observation}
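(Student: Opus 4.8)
The plan is to reduce the biconditional to the defining property of fetch-and-increment: an atomic F\&I object that starts from a fixed value and is only ever incremented returns a \emph{distinct} value to each of its invocations in $I$. First I would observe that \cref{line:time_assignment} lies outside the while loop (it precedes \cref{line:loop_start}), so within a single invocation of DoOp --- that is, within a single operation execution --- it is executed at most once. Consequently $t(o)$ is well-defined as a function of the operation execution $o$ (taking the value $\infty$ exactly when \cref{line:time_assignment} was never reached), which makes the direction ``$o = o' \Rightarrow t(o) = t(o')$'' immediate, since a function returns the same value on equal inputs.

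For the remaining direction I would argue the contrapositive: if $o \neq o'$ then $t(o) \neq t(o')$. Since $t(o) \neq \infty$ and $t(o') \neq \infty$ by hypothesis, both $p(o)$ and $p(o')$ executed \cref{line:time_assignment}, each performing an F\&I on the clock object $\clockobject{}$. Because $o$ and $o'$ are distinct operation executions, these are two distinct low-level steps of $I$, and as steps of a single history they are totally ordered by the time at which they occur. The atomicity of $\clockobject{}$ together with the fact that it is only ever incremented (never reset or decremented) then guarantees that the two F\&I invocations read and return two different counter values, so $t(o) \neq t(o')$. Combining the two directions yields the stated biconditional.

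I do not expect a serious obstacle, as the statement is essentially a restatement of the uniqueness guarantee of F\&I. The only point that needs care is the bookkeeping tying each operation execution to exactly one F\&I step: one must rule out both that two distinct executions share a single F\&I call and that a single execution performs two F\&I calls whose responses could disagree. Both concerns are settled by the placement of \cref{line:time_assignment} before the loop together with \Cref{def:op_state}, which fixes $t(o)$ as the response of that single F\&I step.
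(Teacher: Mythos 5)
Your proposal is correct and takes essentially the same approach as the paper, which justifies this observation in a single sentence preceding it: timestamp assignment uses an atomic \FA{} object (\cref{line:time_assignment}), so each operation receives a unique timestamp. Your write-up merely makes explicit the two ingredients the paper leaves implicit --- that $t$ is well-defined because \cref{line:time_assignment} is executed at most once per DoOp invocation (giving the ``$o = o' \Rightarrow t(o) = t(o')$'' direction), and that distinct F\&I steps on a monotonically incremented atomic counter return distinct values (giving the other direction).
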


The following two observations are about how $\announceobject{}$ and $\stateobject{}$ are well-formed.
Since $\announceobject{}$ is initially\\ $(t(NOOP), NOOP, \helpobjectsymbol{}(NOOP))$ and the new values passed on \cref{line:g_a_gcas} and \cref{line:g_a_cas} are always of the form $(t(o), o, \helpobjectsymbol{}(o))$, we have:

\begin{observation}\label{observation:g_a_is_well_formed}
    For all times $T$ there exists an operation $o$ such that $\announceobject{}$ equals $(t(o), o, \helpobjectsymbol{}(o))$ at $T$.
    Henceforth we abbreviate this as ``$(t(o), o, \helpobjectsymbol{}(o))$ is stored in $\announceobject{}$" or ``$o$ is stored in $\announceobject{}$" at time $T$.
    Furthermore, we say ``an execution of \cref{line:g_a_gcas} or \cref{line:g_a_cas} is for operation $o$" to mean that it is of the form $\text{GCAS(}-, \announceobject{}, -, (t(o), o, \helpobjectsymbol{}(o)))$.
\end{observation}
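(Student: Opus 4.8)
The plan is to prove this as an invariant by induction over the steps of the implementation history $I$, tracking the value held in $\announceobject{}$. Because $\announceobject{}$ is a GCAS object, its stored value changes only on a step that successfully executes a GCAS on it (one whose comparator evaluates to true); every other step leaves $\announceobject{}$ untouched. Hence it suffices to verify that the initial value has the required form and that each successful write to $\announceobject{}$ preserves it.

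For the base case, $\announceobject{}$ is initialized to $(0, NOOP, \helpobjectsymbol{}(NOOP))$, and since $t(NOOP) = 0$ by \Cref{def:op_state}, this is precisely $(t(NOOP), NOOP, \helpobjectsymbol{}(NOOP))$, establishing the claim with $o = NOOP$. For the inductive step, I would note that the only two lines that write to $\announceobject{}$ are \cref{line:g_a_gcas} and \cref{line:g_a_cas}, and that in both the new value supplied to the GCAS is $(t, o, \&\helpobject{p})$, where $p$ is the executing process, $o$ is the operation in $p$'s current DoOp invocation, and $t$ is the timestamp $p$ read on \cref{line:time_assignment}. Since $p = p(o)$, \Cref{def:op_state} gives $t = t(o)$ and $\&\helpobject{p} = \helpobjectsymbol{}(o)$, so any value a successful GCAS stores in $\announceobject{}$ is of the form $(t(o), o, \helpobjectsymbol{}(o))$, which closes the induction.

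The argument is essentially a bookkeeping induction and presents no genuine difficulty; the one step meriting attention is the identification $t = t(o)$ and $\&\helpobject{p} = \helpobjectsymbol{}(o)$, which follows immediately from the observation that the process performing the write is, by definition, the owner of the operation it is currently executing.
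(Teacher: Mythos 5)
Your proof is correct and follows essentially the same route as the paper, which justifies this observation in the single sentence preceding it: $\announceobject{}$ is initially $(t(NOOP), NOOP, \helpobjectsymbol{}(NOOP))$, and the new values written on \cref{line:g_a_gcas} and \cref{line:g_a_cas} are always of the form $(t(o), o, \helpobjectsymbol{}(o))$. Your version merely makes the implicit induction over the history explicit, including the key identification $t = t(o)$ and $\&\helpobject{p} = \helpobjectsymbol{}(o)$ via $p = p(o)$, which the paper leaves tacit.
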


Likewise, $\stateobject{}$ is initially $(t(NOOP), s_0, \perp, \helpobjectsymbol{}(NOOP))$.
Since every new time and help pointer assigned to $\stateobject{}$ on \cref{line:g_r_cas} is supplied from reading $\announceobject{}$, it is always $t(o)$ and $\helpobjectsymbol{}(o)$ by \Cref{observation:g_a_is_well_formed}.
Moreover, the response field is set equal to the response from $apply_T$ which is always not $NULL$ by \Cref{assumption:1}.
Hence:

\begin{observation}\label{observation:g_r_is_well_formed}
    For all times $T$ there exists an operation $o$ such that $\stateobject{}$ equals $(t(o), -, r, \helpobjectsymbol{}(o))$ at $T$ for some response $r \neq NULL$.
    Henceforth we abbreviate this as ``$(t(o), -, r, \helpobjectsymbol{}(o))$ is stored in $\stateobject{}$" or ``$o$ is stored in $\stateobject{}$" at time $T$.
    Furthermore, we say ``an execution of \cref{line:g_r_cas} is for some operation $o$" to mean that it is of the form $\text{GCAS}(=, \stateobject{}, -, (t(o), -, r, \helpobjectsymbol{}(o)))$.
\end{observation}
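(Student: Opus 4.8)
The plan is to prove the statement as an invariant maintained along the implementation history $I$: at every time $T$ there is an operation $o$ with $\stateobject{}$ equal to $(t(o), -, r, \helpobjectsymbol{}(o))$ for some $r \neq NULL$. I would argue this by induction over the steps of $I$, treating separately the initialization and each step that can alter $\stateobject{}$. For the base case, $\stateobject{}$ is initialized to $(t(NOOP), s_0, \perp, \helpobjectsymbol{}(NOOP))$, which already matches the required form with $o = NOOP$ and $r = \perp$; the only thing to check is $\perp \neq NULL$, which holds because $\perp$ is the designated response of the initial $NOOP$ operation and is distinct from the sentinel $NULL$ (indeed, this distinction is exactly what makes $NOOP$ count as done under \Cref{def:done}).

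For the inductive step I would first note that the value of $\stateobject{}$ can change only through a successful GCAS on \cref{line:g_r_cas}, as this is the only line in \Cref{alg:wait-free-simple} that writes to $\stateobject{}$. Fix such a successful GCAS; it stores $(t', s', r', roptr')$, where $(t', o', roptr')$ was read from $\announceobject{}$ on \cref{line:g_a_query} and $(s', r') = apply_{\mathcal{T}}(o', s^*)$ was computed on \cref{line:apply}. By \Cref{observation:g_a_is_well_formed}, the triple read from $\announceobject{}$ equals $(t(o'), o', \helpobjectsymbol{}(o'))$, so $t' = t(o')$ and $roptr' = \helpobjectsymbol{}(o')$. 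Hence the newly stored value is $(t(o'), s', r', \helpobjectsymbol{}(o'))$, which has the desired form with $o = o'$ as soon as we establish $r' \neq NULL$, so the entire step reduces to this last fact.

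Proving $r' \neq NULL$ is the part I expect to require the most care, because \Cref{assumption:1} guarantees only that $NULL$ differs from the responses of genuine operations of $\mathcal{T}$, and not necessarily from whatever $apply_{\mathcal{T}}$ might return on $NOOP$. I would therefore rule out $o' = NOOP$ before appealing to the assumption. If $o' = NOOP$, then by \Cref{observation:g_a_is_well_formed} we have $t' = t(NOOP) = 0$ and $roptr' = \helpobjectsymbol{}(NOOP)$, so the read on \cref{line:help_pointer_query} returns the immutable contents $(\hat{t}, \hat{r}) = (0, \perp)$; since $\perp \neq NULL$, the guard $(\hat{t}, \hat{r}) = (t', NULL)$ on \cref{line:help} is false and \cref{line:apply} is never reached. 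Consequently, whenever \cref{line:apply} does execute we have $o' \neq NOOP$, so $r'$ is a response of a genuine operation of $\mathcal{T}$ and \Cref{assumption:1} yields $r' \neq NULL$. This preserves the invariant and completes the induction.
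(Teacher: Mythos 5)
Your proposal is correct and follows essentially the same route as the paper's own justification, which likewise argues that $\stateobject{}$ is initially well-formed and that every write on \cref{line:g_r_cas} takes its time and pointer fields from $\announceobject{}$ (hence well-formed by \Cref{observation:g_a_is_well_formed}) and its response field from $apply_{\mathcal{T}}$ (hence non-$NULL$ by \Cref{assumption:1}). Your explicit step ruling out $o' = NOOP$ before invoking \Cref{assumption:1} --- by observing that the guard on \cref{line:help} fails when $\announceobject{}$ holds $NOOP$, since $\helpobjectsymbol{}(NOOP)$永远 contains $(0,\perp)$ with $\perp \neq NULL$ --- tightens a point the paper passes over silently, since the assumption only constrains responses of genuine operations of $\mathcal{T}$.
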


The following two observations concern each process $p$'s response object $\helpobject{p}$.

\begin{observation}\label{observation:help_pointer_cas_is_well_formed}
    For each execution of \cref{line:help_pointer_cas} there exists an operation $o$ where said execution is of the form $\text{GCAS}(=, \helpobjectsymbol{}(o), (t(o), NULL), (t(o), r))$ for some response $r \neq NULL$.
    Henceforth we abbreviate this as ``\cref{line:help_pointer_cas} is executed for $o$".
\end{observation}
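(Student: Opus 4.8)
The plan is to derive this directly from the well-formedness of $\stateobject{}$ established in \Cref{observation:g_r_is_well_formed}. The crucial point is that the three values $t^*$, $r^*$, and $roptr^*$ that parametrize the GCAS on \cref{line:help_pointer_cas} are not arbitrary: in every iteration of the while loop, they are read together from $\stateobject{}$ on \cref{line:g_r_query}, which is the statement immediately preceding \cref{line:help_pointer_cas} in the loop body. So to pin down the shape of the GCAS, it suffices to understand what a read of $\stateobject{}$ can return.

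First I would fix an arbitrary execution of \cref{line:help_pointer_cas}, performed by some process $p$ in some iteration of its while loop, and identify the matching execution of \cref{line:g_r_query} by $p$ in that same iteration. This read assigns to $(t^*, s^*, r^*, roptr^*)$ the contents of $\stateobject{}$ at the time of the read. By \Cref{observation:g_r_is_well_formed}, at that time there is an operation $o$ such that $\stateobject{}$ equals $(t(o), -, r, \helpobjectsymbol{}(o))$ for some response $r \neq NULL$. Matching the fields component by component yields $t^* = t(o)$, $r^* = r \neq NULL$, and $roptr^* = \helpobjectsymbol{}(o)$.

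It then remains only to substitute these equalities into the literal form of the GCAS. Since \cref{line:help_pointer_cas} executes $\text{GCAS}(=, roptr^*, (t^*, NULL), (t^*, r^*))$, substituting gives $\text{GCAS}(=, \helpobjectsymbol{}(o), (t(o), NULL), (t(o), r))$ with $r \neq NULL$, which is exactly the claimed form, establishing the observation.

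As for difficulty, there is essentially no obstacle. The content is a bookkeeping consequence of \Cref{observation:g_r_is_well_formed} together with the syntactic fact that \cref{line:help_pointer_cas} reuses the tuple read on \cref{line:g_r_query}. The only point that requires care is insisting that $r^* \neq NULL$, so that the promised $r$ is a genuine response; this is precisely the clause of \Cref{observation:g_r_is_well_formed} guaranteeing that the response stored in $\stateobject{}$ is never $NULL$, which in turn traces back to \Cref{assumption:1} and the fact that values written to the response field of $\stateobject{}$ come from $apply_{\mathcal{T}}$.
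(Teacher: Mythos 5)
Your proposal is correct and matches the paper's intended justification: the paper states this as an observation with no written proof, precisely because it follows immediately from the fact that the arguments of the GCAS on \cref{line:help_pointer_cas} are the tuple $(t^*, r^*, roptr^*)$ read from $\stateobject{}$ on \cref{line:g_r_query} in the same iteration, combined with \Cref{observation:g_r_is_well_formed}. Your field-by-field substitution, including the point that $r^* \neq NULL$ comes from the non-$NULL$ response clause of \Cref{observation:g_r_is_well_formed}, is exactly this argument spelled out.
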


\begin{observation}\label{observation:only_owner_changes_times_of_help_struct}
    For all processes $p$ the following hold:
    \begin{itemize}
        \item $\helpobject{p}.time$ only changes by $p$ executing \cref{line:reset_help_struct}.
        \item Every execution of \cref{line:reset_help_struct} by $p$ sets $\helpobject{p}.time$ to a unique value.
        \item $\helpobject{p}.time$ is monotonically increasing.
    \end{itemize}
\end{observation}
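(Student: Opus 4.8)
The plan is to establish the three bullet points in order; the crux of the whole argument is to enumerate exactly which low-level steps can write to a response object and to observe what each such write does to the $time$ field.

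First I would argue that the only lines that can modify any response object $\helpobject{q}$ are \cref{line:reset_help_struct} and \cref{line:help_pointer_cas}. The assignment on \cref{line:reset_help_struct} is performed by the owner $p$ on its own object $\helpobject{p}$, so it can change $\helpobject{p}.time$. The key step is \cref{line:help_pointer_cas}: although this GCAS may be executed by any process and may target a response object owned by a different process, \Cref{observation:help_pointer_cas_is_well_formed} tells us that every such execution has the form $\text{GCAS}(=, \helpobjectsymbol{}(o), (t(o), NULL), (t(o), r))$, in which the old value and the new value share the \emph{same} time component $t(o)$. Hence, whether or not this GCAS succeeds, it never alters the $time$ field of the targeted object; it can only overwrite the $response$ field. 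This immediately yields the first bullet: $\helpobject{p}.time$ changes only through $p$'s own executions of \cref{line:reset_help_struct}.

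For the second bullet, I would observe that each execution of \cref{line:reset_help_struct} by $p$ sets $\helpobject{p}.time$ to the value $t$ that $p$ obtained from the immediately preceding \FA{} on $\clockobject{}$ at \cref{line:time_assignment} of the same invocation of DoOp. Since each invocation of DoOp by $p$ performs exactly one \FA{}, and distinct \FA{} operations on $\clockobject{}$ return distinct values (the uniqueness underlying \Cref{observation:unique_timestamps}), distinct executions of \cref{line:reset_help_struct} by $p$ write distinct values.

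The third bullet, monotonicity, is where I would combine process sequentiality with the monotonicity of the clock. Because a process is sequential, its successive executions of \cref{line:time_assignment} occur at strictly increasing times; and because $\clockobject{}$ is only ever incremented by \FA{}, its value never decreases over time. Consequently, if $p$ obtains $t_1$ at one execution of \cref{line:time_assignment} and then $t_2$ at a later one, the clock value at the later \FA{} is at least $t_1 + 1$, so $t_2 > t_1$. Since by the first bullet $\helpobject{p}.time$ takes exactly these values in this order, it is monotonically increasing. The main obstacle throughout is the first bullet: one must be careful that the helping step \cref{line:help_pointer_cas}, which does write into other processes' response objects, is harmless to the time field, and this is precisely what \Cref{observation:help_pointer_cas_is_well_formed} secures.
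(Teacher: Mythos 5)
Your proof is correct, and since the paper states this result as an observation with no written proof, your argument---enumerating that only \cref{line:reset_help_struct} and \cref{line:help_pointer_cas} can write to a response object, invoking \Cref{observation:help_pointer_cas_is_well_formed} to show that \cref{line:help_pointer_cas} always leaves the $time$ field unchanged, and then using uniqueness and monotonicity of the \FA{} timestamps for the remaining two bullets---is precisely the intended justification. In particular, your reliance on \Cref{observation:help_pointer_cas_is_well_formed} introduces no circularity, because that observation rests only on \Cref{observation:g_r_is_well_formed} and \Cref{observation:g_a_is_well_formed}, neither of which depends on the statement being proved.
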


Using these definitions and observations we now prove some basic facts about our universal construction.

\begin{lemma}\label{lemma:once_response_is_not_null_only_owner_sets_to_null}
    Suppose a process $p$'s $\helpobject{p} = (t, r)$ at time $T$ where $r \neq NULL$.
    For all times $T' \geq T$ if $\helpobject{p}.time = t$ at $T'$ then $\helpobject{p}.response = r$ at $T'$.
\end{lemma}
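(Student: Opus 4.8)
The plan is to track every possible write to $\helpobject{p}$ over the interval $[T, T']$ and to show that, under the hypothesis that $\helpobject{p}.time = t$ at both endpoints, none of them can disturb the response field. Inspecting \Cref{alg:wait-free-simple}, the only lines that write to a response object are \cref{line:reset_help_struct} (executed by the owner $p$) and \cref{line:help_pointer_cas} (executed by any helper). So it suffices to rule out executions of the former during the interval and to show that executions of the latter are always no-ops.

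First I would pin down the time field. By \Cref{observation:only_owner_changes_times_of_help_struct}, $\helpobject{p}.time$ is monotonically increasing and changes only when $p$ executes \cref{line:reset_help_struct}, each such execution setting it to a unique (hence, together with monotonicity, strictly larger) value. Since $\helpobject{p}.time = t$ at $T$ and at $T'$, monotonicity forces $\helpobject{p}.time = t$ at every time in $[T, T']$; in particular the value never strictly increases in this interval, so $p$ performs no execution of \cref{line:reset_help_struct} during $(T, T']$. This eliminates the only line that could reset the response to $NULL$.

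Next I would handle \cref{line:help_pointer_cas}. By \Cref{observation:help_pointer_cas_is_well_formed}, every execution of this line targeting $\helpobject{p}$ has the form $\text{GCAS}(=, \helpobject{p}, (t(o), NULL), (t(o), r''))$ for some operation $o$ with $\helpobjectsymbol{}(o) = \helpobject{p}$ and some $r'' \neq NULL$; because the comparator is $=$, such a GCAS writes to $\helpobject{p}$ only if its current value equals $(t(o), NULL)$, and in particular only if the current response is $NULL$. I would then argue by induction over the steps in $(T, T']$ that the invariant ``$\helpobject{p} = (t, r)$'' is maintained: it holds at $T$ by assumption; in the inductive step the only candidate writes are executions of \cref{line:help_pointer_cas} (those of \cref{line:reset_help_struct} having been excluded in the previous paragraph), and since by the inductive hypothesis the current response is $r \neq NULL$, each such GCAS fails its equality test and leaves $\helpobject{p}$ unchanged. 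Hence $\helpobject{p} = (t, r)$ throughout $[T, T']$, so in particular $\helpobject{p}.response = r$ at $T'$, as required.

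The main obstacle is the apparent circularity in the GCAS argument: to conclude that a helper's $\text{GCAS}(=, \ldots)$ fails I need the response to still be $r \neq NULL$ at that moment, which is precisely what I am trying to prove. The induction — equivalently, a minimal-counterexample argument isolating the first step that would change the response — is what breaks this circularity, and it rests essentially on the equality comparator: a $\text{GCAS}(=, \ldots)$ can overwrite $\helpobject{p}$ only when its response is $NULL$, so once the response is non-$NULL$ it stays ``locked'' until the owner advances the timestamp, which the time-field argument of the second paragraph has already shown cannot happen while $\helpobject{p}.time = t$.
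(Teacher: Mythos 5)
Your proposal is correct and takes essentially the same route as the paper's proof: it uses \Cref{observation:only_owner_changes_times_of_help_struct} to fix $\helpobject{p}.time = t$ throughout $[T, T']$ and thereby rule out executions of \cref{line:reset_help_struct}, then argues that every execution of \cref{line:help_pointer_cas} in the interval fails its equality test because the response field is non-$NULL$. The only difference is presentational: you make explicit the induction over steps (equivalently, the first-violation argument) that breaks the circularity of ``the CAS fails because the response is still $r$,'' whereas the paper leaves that step implicit.
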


\begin{proof}
    Suppose that $\helpobject{p} = (t, r)$ at time $T$ where $r \neq NULL$, and $\helpobject{p}.time = t$ at time $T' \geq T$.
    Since by \Cref{observation:only_owner_changes_times_of_help_struct}, $\helpobject{p}.time$ is monotonically increasing, this implies $\helpobject{p}.time$ equals $t$ throughout $[T, T']$.
    Since by \Cref{observation:only_owner_changes_times_of_help_struct}, $\helpobject{p}.time$ only changes by $p$ executing \cref{line:reset_help_struct}, and every execution of \cref{line:reset_help_struct} by $p$ sets $\helpobject{p}.time$ to a unique value, this implies \cref{line:reset_help_struct} has not been executed by $p$ throughout $[T, T']$.
    Thus throughout $[T, T']$ the contents of $\helpobject{p}$ can only changed by \cref{line:help_pointer_cas}.
    However, since \cref{line:help_pointer_cas} only changes $\helpobject{p}.response$ if it equals $NULL$, and $\helpobject{p}.response = r$ which is not $NULL$ at $T$, all executions of \cref{line:help_pointer_cas} throughout $[T, T']$ return false.
    Therefore, $\helpobject{p}.response = r$ throughout $[T, T']$.
    \qH{\Cref{lemma:once_response_is_not_null_only_owner_sets_to_null}}
\end{proof}

\begin{lemma}\label{lemma:complete_implies_done}
    If operation $o$ is complete at time $T$ then $o$ is done at some time $T' < T$.
\end{lemma}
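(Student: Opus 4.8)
The plan is to unwind the definition of ``complete at $T$'' back to the moment process $p(o)$ leaves the while loop of \cref{line:loop_start}, and to argue that the loop can terminate only because $\helpobject{p(o)}.response$ has become non-$NULL$ while $\helpobject{p(o)}.time$ still equals $t(o)$ --- which is exactly the condition for $o$ to be done.

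First I would invoke \Cref{def:complete}: since $o$ is complete at $T$, there is a time $T_1 \le T$ at which $p(o)$ executes \cref{line:op_done} within DoOp($o$). Reaching \cref{line:op_done} requires the while loop of \cref{line:loop_start} to terminate, so at some earlier time $T_2 < T_1$ the process $p(o)$ evaluated the loop guard $\helpobject{p(o)} = (t(o), NULL)$ and found it \emph{false}. The crux is to show what this falsity tells us. Writing $p = p(o)$ and $t = t(o)$, I would argue that $\helpobject{p}.time = t$ holds at $T_2$: by \Cref{observation:only_owner_changes_times_of_help_struct} the field $\helpobject{p}.time$ is changed only by $p$ executing \cref{line:reset_help_struct}, and within a single invocation DoOp($o$) this line is executed exactly once, before the loop is entered, setting $\helpobject{p}.time$ to $t$. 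Hence throughout the entire loop of this invocation --- and in particular at $T_2$ --- we have $\helpobject{p}.time = t$.

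Given $\helpobject{p}.time = t$ at $T_2$, the only way the guard $\helpobject{p} = (t, NULL)$ can fail is for $\helpobject{p}.response \neq NULL$ at $T_2$. Thus at $T_2$ we have $\helpobjectsymbol{}(o) = \helpobject{p} = (t(o), r)$ for some $r \neq NULL$, which is precisely the statement that $o$ is done at $T_2$ by \Cref{def:done}. Chaining the inequalities $T_2 < T_1 \le T$ then gives $T_2 < T$, which is the claimed conclusion.

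I expect the only delicate point to be the claim that $\helpobject{p}.time = t$ stays fixed for the duration of the loop. This is what rules out the alternative that the guard failed because the \emph{time} field changed rather than the \emph{response} field; the key facts are that $p$ does not re-execute \cref{line:reset_help_struct} until its next DoOp invocation and that, by \Cref{observation:only_owner_changes_times_of_help_struct}, no other process can alter $\helpobject{p}.time$. Everything else is a direct substitution into \Cref{def:complete} and \Cref{def:done}.
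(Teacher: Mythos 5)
Your proof is correct and takes essentially the same route as the paper's: unwind \Cref{def:complete} to the moment the loop guard on \cref{line:loop_start} evaluates to false, use \Cref{observation:only_owner_changes_times_of_help_struct} to pin the $time$ field of $\helpobjectsymbol{}(o)$ at $t(o)$ so the guard's falsity must come from a non-$NULL$ response field, and conclude via \Cref{def:done}. The only difference is that you justify in more detail why the $time$ field stays fixed for the duration of the loop, a step the paper states tersely.
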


\begin{proof}
    If operation $o$ is complete at time $T$ then by \Cref{def:complete} there exists a time $T_1 \leq T$ where $p(o)$ executed \cref{line:op_done} at time $T_1$.
    Hence, \cref{line:loop_start} was false sometime $T_2 < T_1$.
    Since by \Cref{observation:only_owner_changes_times_of_help_struct} \cref{line:reset_help_struct} is the only step that changes the time field of $\helpobjectsymbol{}(o)$, this implies the time field equals $t(o)$ at $T_2$.
    Thus, at $T_2$, $\helpobjectsymbol{}(o) = (t(o), r)$ where $r \neq NULL$.
    Therefore, by \Cref{def:done} $o$ was done at $T_2$ which is before $T$ by transitivity.
    \qH{\Cref{lemma:complete_implies_done}}
\end{proof}

\begin{lemma}\label{lemma:help_struct_change_implies_done}
    For all operations $o$, if $\helpobjectsymbol{}(o) = (t(o), -)$ at time $T$ and $\helpobjectsymbol{}(o) = (t, -)$ such that $t \neq t(o)$ at time $T' > T$ then there exists a time $T^* < T'$ where $o$ is done at $T^*$.
\end{lemma}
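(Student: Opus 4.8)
The plan is to show that the change in the time field of $\helpobjectsymbol{}(o)$ forces $p(o)$ to have completed $\text{DoOp}(o)$ before $T'$, and then to invoke \Cref{lemma:complete_implies_done}. Write $p = p(o)$. By \Cref{observation:only_owner_changes_times_of_help_struct} the field $\helpobject{p}.time$ changes only when $p$ executes \cref{line:reset_help_struct}, it is monotonically increasing, and each such execution installs a unique value. Since $\helpobject{p}.time = t(o)$ at $T$ and $\helpobject{p}.time = t \neq t(o)$ at $T' > T$, the field must have changed during $(T, T']$, so $p$ executed \cref{line:reset_help_struct} at least once in that interval.

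First I would let $T_r \in (T, T']$ be the time of the first execution of \cref{line:reset_help_struct} by $p$ after $T$. Because no reset occurs in $(T, T_r)$, the field still equals $t(o)$ just before $T_r$, and by the uniqueness of reset values the reset at $T_r$ installs a value different from $t(o)$. The key structural observation is that \cref{line:reset_help_struct} lies outside the while loop of \Cref{alg:wait-free-simple} and is therefore executed exactly once per invocation of $\text{DoOp}$, immediately after the \FA{} on \cref{line:time_assignment}. Hence the reset that installed $t(o)$ (the value present at $T$) was the one inside $\text{DoOp}(o)$, and the reset at $T_r$, installing a different timestamp, belongs to a strictly later invocation $\text{DoOp}(\tilde{o})$ of $p$.

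Next I would use the fact that $p$'s steps follow the alternating invocation/response pattern of the model: for $p$ to reach \cref{line:reset_help_struct} inside $\text{DoOp}(\tilde{o})$, it must first have invoked $\tilde{o}$, which it can do only after receiving the response of its previous operation $o$, i.e., after executing \cref{line:op_done} in $\text{DoOp}(o)$. Thus there is a time $T_{\text{done}} < T_r \leq T'$ at which $p$ executed \cref{line:op_done} within $\text{DoOp}(o)$, so by \Cref{def:complete} the operation $o$ is complete at $T_{\text{done}} < T'$. Applying \Cref{lemma:complete_implies_done} at the time $T_{\text{done}}$ then yields a time $T^* < T_{\text{done}} < T'$ at which $o$ is done, which is exactly what is required.

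The main obstacle I expect is the bookkeeping in the middle step: pinning down that the reset observed at $T_r$ genuinely belongs to an invocation $\text{DoOp}(\tilde{o})$ strictly later than $\text{DoOp}(o)$, rather than to $\text{DoOp}(o)$ itself. This rests on two facts that must be stated together carefully — that \cref{line:reset_help_struct} is executed exactly once per $\text{DoOp}$ (so that the uniqueness of reset values in \Cref{observation:only_owner_changes_times_of_help_struct} cleanly separates distinct invocations), and that $p$ is sequential, so that a reset belonging to a later invocation certifies completion of the earlier one. Once this link is secured, the remainder is a direct appeal to \Cref{lemma:complete_implies_done}.
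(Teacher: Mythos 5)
Your proposal is correct and follows essentially the same route as the paper's own proof: both use \Cref{observation:only_owner_changes_times_of_help_struct} to deduce that $p(o)$ executed \cref{line:reset_help_struct} for a different operation by time $T'$, invoke the sequentiality of $p(o)$'s invocations to conclude that $o$ was complete before that reset, and finish by applying \Cref{lemma:complete_implies_done}. Your version merely spells out more explicitly the bookkeeping (first reset after $T$, one reset per invocation of DoOp) that the paper's terser proof leaves implicit.
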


\begin{proof}
    Since by \Cref{observation:only_owner_changes_times_of_help_struct} the time field of $\helpobjectsymbol{}(o)$ is monotonically increasing, $t \neq t(o)$, and $T' > T$, $t(o) < t$.
    Hence, $p(o)$ executed \cref{line:reset_help_struct} for some operation $o' \neq o$ where $t = t(o')$ at time $T_1 \leq T'$.
    Thus $p(o)$ completed $o$ at some time $T_2 < T_1$.
    Therefore \Cref{lemma:complete_implies_done} implies $o$ is done at $T_2$ which is before $T'$ by transitivity.
    \qH{\Cref{lemma:help_struct_change_implies_done}}
\end{proof}

\begin{lemma}\label{lemma:copy_help_struct_implies_done}
    If a process executes \cref{line:help_pointer_cas} for operation $o$ at time $T$ then $o$ is done at $T$.
\end{lemma}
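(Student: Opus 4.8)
The plan is to first use \Cref{observation:help_pointer_cas_is_well_formed} to fix the exact form of the step, and then perform a short case analysis on the value of $\helpobjectsymbol{}(o)$ at time $T$. By \Cref{observation:help_pointer_cas_is_well_formed}, the execution of \cref{line:help_pointer_cas} for $o$ at $T$ is $\text{GCAS}(=, \helpobjectsymbol{}(o), (t(o), NULL), (t(o), r))$ for some $r \neq NULL$. Let $(t_c, r_c)$ denote the value of $\helpobjectsymbol{}(o)$ that this GCAS reads at $T$; the whole argument then hinges on comparing $t_c$ with $t(o)$.

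I would dispatch the case $t_c = t(o)$ directly from the atomic semantics of GCAS and \Cref{def:done}. If $r_c = NULL$, the comparison succeeds, so the GCAS atomically writes $(t(o), r)$; thus $\helpobjectsymbol{}(o) = (t(o), r)$ with $r \neq NULL$ at $T$, and $o$ is done at $T$ with witness $T' = T$. If instead $r_c \neq NULL$, then already $\helpobjectsymbol{}(o) = (t(o), r_c)$ with $r_c \neq NULL$ at $T$, so $o$ is done at $T$. This case also covers $o = NOOP$, whose response object is the immutable $(0, \perp) = (t(NOOP), \perp)$ with $\perp \neq NULL$, so $t_c = t(NOOP)$ necessarily holds.

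The remaining case, $t_c \neq t(o)$, is where I would invoke \Cref{lemma:help_struct_change_implies_done}, and it is also where the main work lies. To apply that lemma I must produce a time $T_0 < T$ at which $\helpobjectsymbol{}(o).time = t(o)$. I would obtain this by tracing the algorithm backwards from the fact that \cref{line:help_pointer_cas} is for $o$: the executing process read $o$ from $\stateobject{}$ on \cref{line:g_r_query} before $T$, so $o$ was stored in $\stateobject{}$ before $T$; by \Cref{observation:g_r_is_well_formed} this can only result from a successful \cref{line:g_r_cas} whose time and pointer fields were copied from $\announceobject{}$, so $o$ was stored in $\announceobject{}$ earlier; and by \Cref{observation:g_a_is_well_formed} that in turn requires a successful \cref{line:g_a_gcas} or \cref{line:g_a_cas} for $o$, which is executed by $p(o)$ inside DoOp($o$) and hence strictly after $p(o)$ set $\helpobjectsymbol{}(o) = (t(o), NULL)$ on \cref{line:reset_help_struct}. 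This pins the desired $T_0 < T$ (and incidentally forces $o \neq NOOP$ and $t(o) \neq \infty$, consistent with being in this case).

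With $\helpobjectsymbol{}(o) = (t(o), -)$ at $T_0$ and $\helpobjectsymbol{}(o) = (t_c, -)$ with $t_c \neq t(o)$ at $T > T_0$, \Cref{lemma:help_struct_change_implies_done} hands back a time $T^* < T$ at which $o$ is done, and \Cref{observation:if_once_done_then_always_done} propagates this forward to $T$. I expect the only real obstacle to be the backward-tracing chain through $\stateobject{}$ and $\announceobject{}$ used to secure $T_0 < T$; the rest follows immediately from the atomicity of the GCAS and \Cref{def:done}. If this chain is not already available as a separate fact, I would isolate it as a one-line auxiliary claim before beginning the case analysis.
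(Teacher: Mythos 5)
Your proposal is correct and takes essentially the same route as the paper's proof: both fix the form of the step via \Cref{observation:help_pointer_cas_is_well_formed}, dispatch the case where the read timestamp equals $t(o)$ (including $o = NOOP$) directly from \Cref{def:done}, and handle the mismatched-timestamp case by exhibiting an earlier time at which $\helpobjectsymbol{}(o)$ had time field $t(o)$ and then applying \Cref{lemma:help_struct_change_implies_done} together with \Cref{observation:if_once_done_then_always_done}. The only difference is where the backward trace stops: the paper halts at the helping process's read on \cref{line:help_pointer_query} (the truth of the condition on \cref{line:help} already gives $\helpobjectsymbol{}(o) = (t(o), NULL)$ there), whereas you trace one stage further, through the successful write of $o$ into $\announceobject{}$, back to $p(o)$'s own execution of \cref{line:reset_help_struct} --- a slightly longer but equally valid chain.
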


\begin{proof}
    Suppose a process $p$ executes \cref{line:help_pointer_cas} for $o$ at some time $T$.
    If $o = NOOP$ then by \Cref{def:op_state} $\helpobjectsymbol{}(NOOP) = (0, \perp)$ always.
    Hence by \Cref{def:done}, $o$ is done.
    
    Now suppose $o \neq NOOP$.
    By \Cref{observation:help_pointer_cas_is_well_formed} $p$'s execution of \cref{line:help_pointer_cas} at time $T$ was of the form 
    
    \noindent $\text{GCAS}(=, \helpobjectsymbol{}(o), (t(o), NULL), (t(o), r))$ for some $r \neq NULL$.
    Consider the response of this GCAS.
    If it is true this implies $\helpobjectsymbol{}(o) = (t(o), r)$ at time $T$.
    Hence by \Cref{def:done} $o$ is done at $T$.
    Otherwise, the response of $p$'s GCAS is false.
    This implies $\helpobjectsymbol{}(o) = (t', r')$ such that $t' \neq t(o)$ or $r' \neq NULL$ at time $T$.
    Suppose $t' = t(o)$.
    Hence, $r' \neq NULL$.
    Thus, by \Cref{def:done} $o$ is done at $T$.
    Now suppose $t' \neq t(o)$.
    Since $p$ executed \cref{line:help_pointer_cas} for $o$, this implies $\helpobjectsymbol{}(o)$, $t(o)$, and $r$ were stored in $\stateobject{}$ at time $T_1 < T$ when $p$ executed \cref{line:g_r_query}.
    Since $o \neq NOOP$ this implies some process $p'$ executed \cref{line:g_r_cas} for $o$ that returned true at time $T_2 < T_1$.
    Hence, $p'$ executed \cref{line:help} whose condition was true at time $T_3 < T_2$.
    Thus, when $p'$ execute \cref{line:help_pointer_query} at time $T_4 < T_3$ $\helpobjectsymbol{}(o) = (t(o), NULL)$.
    Since $\helpobjectsymbol{}(o) = (t(o), NULL)$ at $T_4 < T$ and $\helpobjectsymbol{}(o) = (t', r')$ at $T$ where $t' \neq t(o)$, by \Cref{lemma:help_struct_change_implies_done} $o$ is done at some time $T_5 < T$.
    Therefore by \Cref{observation:if_once_done_then_always_done} $o$ is done at $T$.
    \qH{\Cref{lemma:copy_help_struct_implies_done}}
\end{proof}

\begin{lemma}\label{lemma:non_null_help_struct_response}
    Suppose $\stateobject{}$ stores operations $o$ and $o'$ at times $T$ and $T' > T$, respectively.
    If $o \neq o'$ then $o$ is done at $T'$.
\end{lemma}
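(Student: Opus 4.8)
The plan is to use the fact that $\stateobject{}$ is only ever modified by successful executions of \cref{line:g_r_cas}, and to observe that any such execution is preceded, within the \emph{same} loop iteration, by an execution of \cref{line:help_pointer_cas} for precisely the operation that was stored in $\stateobject{}$ at the time of the read on \cref{line:g_r_query}. Once I pin down that this copy step is ``for $o$'', I can invoke \Cref{lemma:copy_help_struct_implies_done} to conclude that $o$ is done, and then propagate this forward in time with \Cref{observation:if_once_done_then_always_done}.

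First I would fix the transition precisely. Since $\stateobject{}$ stores $o$ at $T$ and stores $o' \neq o$ at $T' > T$, there is a first time $T_1 \in (T, T']$ at which $\stateobject{}$ ceases to store $o$; in particular $\stateobject{}$ still stores $o$ immediately before $T_1$. Because \cref{line:g_r_cas} is the only step that writes $\stateobject{}$, this transition is caused by an execution of $\text{GCAS}(=, \stateobject{}, (t^*, s^*, r^*, roptr^*), \ldots)$ by some process $p$ that returns true at $T_1$. As this GCAS succeeds, $\stateobject{}$ equals $(t^*, s^*, r^*, roptr^*)$ immediately before $T_1$, where it still stores $o$; hence by \Cref{observation:g_r_is_well_formed} we have $t^* = t(o)$, $roptr^* = \helpobjectsymbol{}(o)$, and $r^* \neq NULL$.

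Next I would trace $p$'s execution back within this loop iteration. The local values $(t^*, s^*, r^*, roptr^*)$ are assigned only on \cref{line:g_r_query}, so they were read from $\stateobject{}$ by $p$ on \cref{line:g_r_query} in the same iteration, at some time before $T_1$; and between that read and the successful \cref{line:g_r_cas}, $p$ unconditionally executed \cref{line:help_pointer_cas} with these same values, namely $\text{GCAS}(=, \helpobjectsymbol{}(o), (t(o), NULL), (t(o), r^*))$. By \Cref{observation:help_pointer_cas_is_well_formed} (and uniqueness of timestamps), this execution is ``for $o$'', occurring at some time $T_2 < T_1$. \Cref{lemma:copy_help_struct_implies_done} then gives that $o$ is done at $T_2$, and since $T_2 < T_1 \leq T'$, \Cref{observation:if_once_done_then_always_done} yields that $o$ is done at $T'$, as required.

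The part that needs the most care is the ``same-iteration'' coupling: I must argue cleanly that the very GCAS on \cref{line:g_r_cas} that first removes $o$ from $\stateobject{}$ drew its compare-value from the read on \cref{line:g_r_query} in that same iteration, and that \cref{line:help_pointer_cas} ran in between on those same values, so that it is provably ``for $o$''. A secondary subtlety is the possibility that $\stateobject{}$ is overwritten with $o$ again before $T_1$; this is harmless because I only ever use the contents of $\stateobject{}$ immediately before the single transition at $T_1$, but I should state the choice of $T_1$ as the \emph{first} time after $T$ that $\stateobject{}$ ceases to store $o$ so that ``$\stateobject{}$ stores $o$ just before $T_1$'' is unambiguous.
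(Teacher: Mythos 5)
Your proof is correct and takes essentially the same approach as the paper's: both pin down the first change to $\stateobject{}$ after $T$ (the paper phrases this as the first successful execution of \cref{line:g_r_cas} from $T$ onwards, you as the first time $\stateobject{}$ ceases to store $o$), use the equality comparison of the successful GCAS to conclude that the writing process read $o$'s tuple on \cref{line:g_r_query} and hence executed \cref{line:help_pointer_cas} for $o$ in the same iteration, and then finish with \Cref{lemma:copy_help_struct_implies_done} and \Cref{observation:if_once_done_then_always_done}. The difference in how the critical time is anchored is purely cosmetic.
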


\begin{proof}
    Since $\stateobject{}$ stores $o$ and later it stores $o' \neq o$, $o' \neq NOOP$. So, some process $p$ executed \cref{line:g_r_cas} for $o'$ at some time $T_1 \leq T'$ and the GCAS on \cref{line:g_r_cas} returned true.
    Since $T' > T$, $o$ is stored in $\stateobject{}$ at $T$, and $o'$ is stored in $\stateobject{}$ at $T'$, $T_1 \geq T$.
    Without loss of generality suppose $T_1$ was the first time from $T$ onwards where \cref{line:g_r_cas} returned true.
    Hence, $p$ read ($t(o)$, $-$, $r$, $\helpobjectsymbol{}(o)$) in $\stateobject{}$ on \cref{line:g_r_query} at some time $T_2 \in (T, T_1)$.
    Thus $p$ executed \cref{line:help_pointer_cas} for $o$ at some time $T_3 \in (T_2, T_1)$.
    By \Cref{lemma:copy_help_struct_implies_done} this implies $o$ is done at $T_3$.
    By transitivity $T_3 < T'$.
    Therefore, by \Cref{observation:if_once_done_then_always_done} $o$ is done at $T'$.
    \qH{\Cref{lemma:non_null_help_struct_response}}
\end{proof}

\section{Proof of Wait-Freedom}

In this section, we prove that \Cref{alg:wait-free-simple} is wait-free, that is:
\begin{definition}[Wait-freedom]
    An implementation of an object is wait-free if a process cannot invoke an operation and then take infinitely many steps without completing it.
\end{definition}

We start by proving some intermediate lemmas.

\begin{lemma}\label{lemma:not_done_implies_help_struct_fixed}
    For all operations $o$, if $\helpobjectsymbol{}(o) = (t(o), NULL)$ at time $T$ and $o$ is not done at $T' \geq T$ then $\helpobjectsymbol{}(o) = (t(o), NULL)$ throughout $[T, T']$.
\end{lemma}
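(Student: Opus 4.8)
The plan is to decompose the target equality $\helpobjectsymbol{}(o) = (t(o), NULL)$ across $[T, T']$ into its two coordinates and handle them in turn: first show that the \emph{time} field of $\helpobjectsymbol{}(o)$ stays equal to $t(o)$ throughout the interval, and then use the not-done hypothesis to pin the \emph{response} field to $NULL$ throughout. As a preliminary remark I would record that, by \Cref{observation:if_once_done_then_always_done}, the assumption that $o$ is not done at $T'$ is equivalent to saying that there is \emph{no} time $T'' \leq T'$ at which $\helpobjectsymbol{}(o) = (t(o), r)$ with $r \neq NULL$; this is just the negation of \Cref{def:done} together with the fact that done-ness is upward closed in time.

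For the time coordinate I would argue by contradiction. At $T$ we have $\helpobjectsymbol{}(o) = (t(o), NULL)$, so its time field equals $t(o)$. Suppose the time field differed from $t(o)$ at some $T_1 \in (T, T']$. Then $\helpobjectsymbol{}(o) = (t(o), -)$ at $T$ and $\helpobjectsymbol{}(o) = (t, -)$ with $t \neq t(o)$ at the later time $T_1$, so \Cref{lemma:help_struct_change_implies_done} hands us a time $T^* < T_1 \leq T'$ at which $o$ is done. But $T^*$ is then a done-witness with $T^* \leq T'$, contradicting the preliminary remark. Hence the time field of $\helpobjectsymbol{}(o)$ equals $t(o)$ at every time in $[T, T']$.

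The response coordinate is then immediate. Since the time field equals $t(o)$ throughout $[T, T']$, if the response field were ever some $r \neq NULL$ at a time $T'' \in [T, T']$, we would have $\helpobjectsymbol{}(o) = (t(o), r)$ with $r \neq NULL$ at $T'' \leq T'$, making $o$ done at $T''$ and again contradicting the preliminary remark. Therefore the response field is $NULL$ throughout, and combining the two coordinates gives $\helpobjectsymbol{}(o) = (t(o), NULL)$ throughout $[T, T']$, as required.

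I expect the only genuinely delicate step to be the time-field invariance, and it rests entirely on invoking \Cref{lemma:help_struct_change_implies_done} correctly: that lemma already converts any departure of the time field from $t(o)$ into a completed (hence done) $o$, which is exactly what the not-done hypothesis forbids; the response coordinate is then little more than unpacking \Cref{def:done} once the time is fixed. A more mechanical alternative for the response coordinate would enumerate the only two writers of $\helpobjectsymbol{}(o)$ — \cref{line:reset_help_struct}, excluded here because it would move the time field (by \Cref{observation:only_owner_changes_times_of_help_struct}), and \cref{line:help_pointer_cas}, which can only flip the response from $NULL$ to non-$NULL$ and only when the stored time matches, again forcing $o$ done — but the definition-based route above is shorter and reuses facts already packaged in the earlier lemmas.
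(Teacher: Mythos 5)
Your proof is correct, and it follows the same two-coordinate decomposition as the paper's proof: first pin the time field of $\helpobjectsymbol{}(o)$ to $t(o)$ throughout $[T, T']$, then use the not-done hypothesis and \Cref{def:done} to pin the response field to $NULL$; your response-field step is essentially identical to the paper's. The difference is in how the time-field invariance is justified. The paper argues forward: by the contrapositive of \Cref{lemma:complete_implies_done}, $o$ is not complete at $T'$, hence $p(o)$ never re-executes \cref{line:reset_help_struct} during $(T, T']$, and then \Cref{observation:only_owner_changes_times_of_help_struct} gives that the time field never moves. You instead argue by contradiction, letting \Cref{lemma:help_struct_change_implies_done} convert any hypothetical departure of the time field into a time at which $o$ is done before $T'$, contradicting the hypothesis (via \Cref{observation:if_once_done_then_always_done} or, equivalently, your unpacking of \Cref{def:done}). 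Since \Cref{lemma:help_struct_change_implies_done} is itself proved from exactly the ingredients the paper inlines (monotonicity of the time field, re-execution of \cref{line:reset_help_struct} only after completion, and \Cref{lemma:complete_implies_done}), the two arguments are morally equivalent; yours is slightly more economical in that it reuses an already-packaged lemma rather than re-deriving the chain, at the cost of wrapping the step in a contradiction, while the paper's forward argument makes the mechanism (no re-initialization of the response object can occur) more explicit.
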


\begin{proof}
    Suppose that $\helpobjectsymbol{}(o) = (t(o), NULL)$ at $T$ and $o$ is not done at time $T' \geq T$.
    By \Cref{lemma:complete_implies_done} $o$ is not complete at $T'$.
    Hence, $p(o)$ has not executed \cref{line:reset_help_struct} throughout $(T, T']$.
    Since by \Cref{observation:only_owner_changes_times_of_help_struct}, \cref{line:reset_help_struct} is the only step that changes the $time$ field of $\helpobjectsymbol{}(o)$, the $time$ field of $\helpobjectsymbol{}(o)$ equals $t(o)$ throughout $[T, T']$.
    Finally, since $o$ is not done at $T'$, and the $time$ field of $\helpobjectsymbol{}(o)$ equals $t(o)$ throughout $[T, T']$, by \Cref{def:done} the response field of $\helpobjectsymbol{}(o)$ equals $NULL$ throughout $[T, T']$.
    Therefore, $\helpobjectsymbol{}(o) = (t(o), NULL)$ throughout $[T, T']$.
    \qH{\Cref{lemma:not_done_implies_help_struct_fixed}}
\end{proof}

\begin{lemma}\label{lemma:if_observed_done_wont_try_to_do}
    Suppose a process $p$ reads $(t(o), o, \helpobjectsymbol{}(o))$ in $\announceobject{}$ on \cref{line:g_a_query} at time $T$ and $o$ is done at that time.
    If $p$ executes \cref{line:help} after $T$ then the condition of \cref{line:help} is false.
\end{lemma}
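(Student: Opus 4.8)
The plan is to unwind the test on \cref{line:help} into a concrete statement about the contents of $\helpobjectsymbol{}(o)$ and then rule it out using the stability of response objects. Because $p$ reads $(t(o), o, \helpobjectsymbol{}(o))$ from $\announceobject{}$ on \cref{line:g_a_query} at time $T$, its local variables satisfy $t' = t(o)$ and $roptr' = \helpobjectsymbol{}(o)$. The value $(\hat{t}, \hat{r})$ examined on \cref{line:help} is exactly what $p$ reads from $roptr' = \helpobjectsymbol{}(o)$ on \cref{line:help_pointer_query}, so the condition of \cref{line:help}, namely $(\hat{t}, \hat{r}) = (t', NULL)$, is simply the assertion that $\helpobjectsymbol{}(o) = (t(o), NULL)$ at the instant $p$ executed \cref{line:help_pointer_query}. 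Thus it suffices to show that $\helpobjectsymbol{}(o) \neq (t(o), NULL)$ at that instant.

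First I would pin down the relevant times. Since \cref{line:help_pointer_query} precedes \cref{line:help} within the loop body and, by hypothesis, $p$ executes \cref{line:help} after $T$, the read on \cref{line:help_pointer_query} occurs at some time $T_2 > T$. Because $o$ is done at $T$, \Cref{def:done} supplies a time $T_1 \leq T$ at which $\helpobjectsymbol{}(o) = (t(o), r)$ for some $r \neq NULL$; note $T_1 \leq T < T_2$. The key step is then to apply \Cref{lemma:once_response_is_not_null_only_owner_sets_to_null} with this $T_1$ as its starting time and $T_2$ as its later time: it tells us that if $\helpobjectsymbol{}(o).time = t(o)$ at $T_2$, then $\helpobjectsymbol{}(o).response = r \neq NULL$ at $T_2$. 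Contrapositively, $\helpobjectsymbol{}(o)$ cannot equal $(t(o), NULL)$ at $T_2$, since that would require the time field to be $t(o)$ while the response field is $NULL$. Hence $(\hat{t}, \hat{r}) \neq (t(o), NULL)$ and the condition of \cref{line:help} is false.

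I do not anticipate a substantive obstacle: the entire argument rides on the stability property of \Cref{lemma:once_response_is_not_null_only_owner_sets_to_null}, which says that once a non-$NULL$ response is recorded for timestamp $t(o)$ it persists as long as the time field stays $t(o)$. The only points requiring care are the bookkeeping ones---correctly identifying $t'$ with $t(o)$ and $roptr'$ with $\helpobjectsymbol{}(o)$ from the read on \cref{line:g_a_query}, and verifying the time inequalities $T_1 \leq T < T_2$ so that \Cref{lemma:once_response_is_not_null_only_owner_sets_to_null} is applicable. (\Cref{observation:if_once_done_then_always_done} could alternatively be used to conclude $o$ is done at $T_2$, but the direct appeal to \Cref{lemma:once_response_is_not_null_only_owner_sets_to_null} is cleaner.)
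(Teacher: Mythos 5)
Your proposal is correct and takes essentially the same approach as the paper: both arguments hinge on applying \Cref{lemma:once_response_is_not_null_only_owner_sets_to_null} between the ``done'' witness time $T_1 \leq T$ and the read on \cref{line:help_pointer_query} at $T_2 > T$, with your contrapositive phrasing merely folding the paper's two-case split on $\hat{t} = t(o)$ versus $\hat{t} > t(o)$ (which also invokes monotonicity from \Cref{observation:only_owner_changes_times_of_help_struct}) into a single step. One small wording slip: $T_2 > T$ holds because \cref{line:help_pointer_query} follows the read on \cref{line:g_a_query} (executed at time $T$) in the same iteration, not---as you wrote---because \cref{line:help_pointer_query} precedes \cref{line:help}, which by itself would not place $T_2$ after $T$.
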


\begin{proof}
    Since $o$ is done at $T$ there exists a time $T_1 \leq T$ where $\helpobjectsymbol{}(o) = (t(o), r)$ such that $r \neq NULL$.
    Furthermore, since $p$ reads $(t(o), o, \helpobjectsymbol{}(o))$ in $\announceobject{}$ on \cref{line:g_a_query} at $T$, $p$ reads $(\hat{t}, \hat{r})$ in $\helpobjectsymbol{}(o)$ on \cref{line:help_pointer_query} at $T_2 > T$.
    Since by \Cref{observation:only_owner_changes_times_of_help_struct} the time field of $\helpobjectsymbol{}(o)$ is monotonically increasing, $\hat{t} \geq t(o)$.
    In \cref{line:help}, $p$ checks whether $(\hat{t}, \hat{r}) = (t(o), NULL)$.
    If $\hat{t} = t(o)$ then by \Cref{lemma:once_response_is_not_null_only_owner_sets_to_null} $\hat{r} = r \neq NULL$.
    Hence, the condition of \cref{line:help} is false.
    Otherwise, $\hat{t} > t(o)$, and so the condition of \cref{line:help} is also false.
    \qH{\Cref{lemma:if_observed_done_wont_try_to_do}}
\end{proof}

\newcommand{\T}{\bar{T}}

\begin{theorem}\label{theorem:wait_free}
    \Cref{alg:wait-free-simple} is wait-free.
\end{theorem}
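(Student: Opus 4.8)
The plan is to argue by contradiction. Suppose \Cref{alg:wait-free-simple} is not wait-free, so some process invokes an operation and then takes infinitely many steps without completing it; call such an operation \emph{stuck}. Every stuck operation $o$ executed \cref{line:time_assignment}, so $t(o) < \infty$, and its owner runs the loop of \cref{line:loop_start} forever. First I would observe that a stuck operation is never done: if $\helpobjectsymbol{}(o) = (t(o), r)$ with $r \neq NULL$ at some time, then by \Cref{lemma:once_response_is_not_null_only_owner_sets_to_null} the response field stays $r$ as long as the time field equals $t(o)$, and since the owner does not reach \cref{line:reset_help_struct} again without first completing $o$, the guard of \cref{line:loop_start} fails and the owner completes $o$ on \cref{line:op_done}, a contradiction. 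Let $o^*$ be a stuck operation of minimum timestamp (timestamps are distinct positive integers by \Cref{observation:unique_timestamps}, so a minimum exists), and let $p^* = p(o^*)$.

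Next I would establish that $\announceobject{}$ eventually holds $o^*$ forever. The key finiteness is that there are only finitely many operations with timestamp below $t(o^*)$. By minimality of $o^*$, none of them is stuck, so each such operation's owner takes only finitely many steps on its behalf and in particular performs only finitely many executions of \cref{line:g_a_gcas} and \cref{line:g_a_cas}. Since only $p(o)$ ever writes $(t(o), o, \helpobjectsymbol{}(o))$ into $\announceobject{}$ (the new value on both \cref{line:g_a_gcas} and \cref{line:g_a_cas} refers to the caller's own operation), there is a time $\T$ after which no operation with timestamp below $t(o^*)$ is ever inserted into $\announceobject{}$. After $\T$, $o^*$ cannot be displaced from $\announceobject{}$: the $>$-comparator on \cref{line:g_a_gcas} could only replace $o^*$ by a smaller-timestamp operation, which no longer occurs; and a replacement via \cref{line:g_a_cas} would require some process to read $o^*$ on \cref{line:g_a_query} and then reach the else-branch, which is impossible because $o^*$ is never done, so by \Cref{lemma:not_done_implies_help_struct_fixed} any read of $\helpobjectsymbol{}(o^*)$ returns $(t(o^*), NULL)$ and every such process finds the guard of \cref{line:help} true and takes the if-branch. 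It remains to drive $o^*$ into $\announceobject{}$: any smaller-timestamp operation occupying $\announceobject{}$ at $\T$ is linearized by the helpers (hence becomes done) and is then cleared via \cref{line:g_a_cas}, after which $\announceobject{}$ holds only operations of timestamp $\geq t(o^*)$; whenever this timestamp strictly exceeds $t(o^*)$, $p^*$'s \cref{line:g_a_gcas} succeeds and installs $o^*$. So from some time $\T'$ onward $\announceobject{}$ stores $o^*$ permanently.

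Finally I would show $o^*$ gets linearized, contradicting that it is never done. Once $\announceobject{} = o^*$ from $\T'$, every process reading $\announceobject{}$ on \cref{line:g_a_query} after $\T'$ obtains $o^*$ and, finding it not done, attempts a \cref{line:g_r_cas} for $o^*$. Here I would use that at the finite time $\T'$ only finitely many processes have taken any step, so only finitely many \emph{stale} executions of \cref{line:g_r_cas} (for operations read before $\T'$) remain; after these are exhausted at some time $\T'' \geq \T'$, every successful \cref{line:g_r_cas} is for $o^*$, so $\stateobject{}$ can never again move to an operation other than $o^*$. Since $p^*$ keeps reading $\stateobject{}$ on \cref{line:g_r_query} and issuing \cref{line:g_r_cas} for $o^*$, if $\stateobject{}$ were frozen at some value after $\T''$ then $p^*$'s next \cref{line:g_r_cas} would match that value and succeed; hence $\stateobject{}$ eventually stores $o^*$ and, by the previous sentence, stays there. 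Then $p^*$ reads $\stateobject{} = o^*$ on \cref{line:g_r_query} and executes \cref{line:help_pointer_cas} for $o^*$, which by \Cref{observation:help_pointer_cas_is_well_formed} sets $\helpobjectsymbol{}(o^*).response$ to a non-$NULL$ value, making $o^*$ done --- the desired contradiction.

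I expect the main obstacle to be the two stabilization arguments under unbounded concurrency: showing that $\announceobject{}$ actually settles on $o^*$ (complicated by the fact that the timestamp in $\announceobject{}$ is \emph{not} monotone, since \cref{line:g_a_cas} can raise it and only $p^*$'s repeated $>$-GCAS pushes it back down), and ruling out an endless churn of $\stateobject{}$ by stale GCASes. Both hinge on the finite-prefix property of the infinite-arrival model --- that only finitely many processes have acted by any finite time --- which bounds the number of in-flight stale operations that can interfere.
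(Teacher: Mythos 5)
Your proposal follows the same overall route as the paper's proof: isolate the minimum\-/timestamp stuck operation ($o^*$ here, $o_{\min}$ in the paper), observe that stuck operations are never done (the paper's \Cref{observation:stuck_implies_not_done}), show that $\announceobject{}$ eventually stores $o^*$ permanently (the paper's \Cref{claim:o_min_always}, with the same three ingredients: finitely many insertions of smaller-timestamp operations, the $>$-comparator on \cref{line:g_a_gcas} blocking displacement by larger-timestamp operations, and the done-check on \cref{line:help} blocking displacement via \cref{line:g_a_cas}), and then show that the helping mechanism forces $o^*$ to become done. Your ``stale GCAS'' stabilization of $\stateobject{}$ via the finite-prefix property is the same fact as the paper's \Cref{claim:eventually_no_other_ops}, just argued a bit more explicitly, and your endgame ($p^*$ reads an $o^*$-entry in $\stateobject{}$ on \cref{line:g_r_query} and then \cref{line:help_pointer_cas} makes $o^*$ done, cf.\ \Cref{lemma:copy_help_struct_implies_done}) matches the paper's \Cref{claim:g_r_cas_true_implies_done}.

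There is, however, one under-justified step, and it is not a side remark: in Part 2 you assert that a smaller-timestamp operation occupying $\announceobject{}$ at your time $\bar{T}$ ``is linearized by the helpers (hence becomes done) and is then cleared via \cref{line:g_a_cas}.'' That assertion is precisely the statement that a permanently-announced operation eventually becomes done --- the paper's \Cref{claim:stuck_implies_done}, which is the most labor-intensive part of the whole theorem --- together with the clearing argument of Case~2 of \Cref{claim:all_in_o_get_out}. Your Part 3 contains the needed machinery, but you state it only for $o^*$ and you use the hypothesis that $o^*$ is never done; to discharge Part 2 you must prove the claim for an arbitrary operation $o$ that sits in $\announceobject{}$ forever, with $p^*$ (who loops forever) as the driving process: if such an $o$ is never done, the Part-3 argument makes it done, and once it is done, $p^*$'s next iteration reads it on \cref{line:g_a_query}, fails the test of \cref{line:help} by \Cref{lemma:if_observed_done_wont_try_to_do}, and executes \cref{line:g_a_cas}, which --- whether it succeeds or fails --- contradicts that $o$ stays in $\announceobject{}$ forever. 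This is exactly why the paper proves \Cref{claim:stuck_implies_done} first, in full generality, and only then proves announce-stabilization. Your ordering (announce first, state second) leaves Part 2 resting on an argument you only give later and only for $o^*$; the dependency is not circular, since the generalized Part-3 argument uses nothing from Part 2, so the repair is organizational rather than conceptual --- but as written it is a gap.
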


\begin{proof}

Suppose for contradiction that \Cref{alg:wait-free-simple} is not wait-free.
This implies that there is an operation $o$ where $p(o)$ takes infinitely many steps in DoOp($o$) but $o$ does not complete. Let:
\begin{definition}\label{def:stuck}
    $O = \{o\ \vert\ \text{$p(o)$ takes infinitely many steps in DoOp($o$) but $o$ does not complete}\}$
\end{definition}

\begin{definition}\label{def:omin}
    $o_{\min}$ be the operation with the minimum timestamp in the set $O$, i.e., $\forall o \in O, t(o_{\min}) \leq t(o).$
\end{definition}

We start with an observation about operations in $O$.
If $o \in O$ then $p(o)$ is stuck in the loop in \cref{line:loop_start}-\ref{line:g_a_cas}.
Thus, infinitely often the response field of $\helpobjectsymbol{}(o)$ is $NULL$.
Furthermore, since by \Cref{observation:only_owner_changes_times_of_help_struct} the time field of $\helpobjectsymbol{}(o)$ only changes on \cref{line:reset_help_struct} and $p(o)$ is stuck in the loop, the time field of $\helpobjectsymbol{}(o)$ is $t(o)$ from the time $p(o)$ executes \cref{line:reset_help_struct} during the execution of DoOp($o$) onwards.
Thus, by \Cref{lemma:once_response_is_not_null_only_owner_sets_to_null} $\helpobjectsymbol{}(o) = (t(o), NULL)$ from this time onwards. Therefore,

\begin{observation}\label{observation:stuck_implies_not_done}
    No operation in $O$ is ever done.
\end{observation}

We prove \Cref{theorem:wait_free} by showing that $o_{\min}$ is done at some time, contradicting \Cref{observation:stuck_implies_not_done}.
We do so in two parts.
The first shows that if an operation $o$ is stored in $\announceobject{}$ from some time onwards then $o$ is done at some time.
Roughly speaking this is because from some time onwards all executions of \cref{line:g_r_cas} are for $o$ implying one of them succeeds.
The second part shows that $o_{\min}$ is stored in $\announceobject{}$ from some time onwards.
Roughly speaking this is because eventually all operations with a smaller assigned timestamp than $o_{\min}$ complete or crash and ops with larger timestamps cannot ``dislodge" $o_{\min}$ from $\announceobject{}$ because of the use of GCAS($>$, $\ldots$) on \cref{line:g_a_gcas}.
Together these imply $o_{\min}$ is done.

\begin{claim}{\ref{theorem:wait_free}.1}\label{claim:stuck_implies_done}
    If operation $o$ is stored in $\announceobject{}$ at all times from some time $T$ onwards then $o$ is eventually done.
\end{claim}

\begin{proof}
    Let $o$ be an operation that is stored in $\announceobject{}$ from time $T$ onwards, and suppose for contradiction that $o$ is never done. This implies the following.

    \begin{claimindented}{\ref{claim:stuck_implies_done}.1}\label{claim:o_help_struct_fixed}
        From time $T$ onwards $\helpobjectsymbol{}(o) = (t(o), NULL)$.
    \end{claimindented}

    \begin{proofindented}
        By the assumption of \Cref{claim:stuck_implies_done} $o$ is written into $\announceobject{}$ at or before $T$.
        Hence, $p(o)$ wrote $o$ into $\announceobject{}$ in \cref{line:g_a_gcas} or \cref{line:g_a_cas} at or before $T$.
        Thus, $p(o)$ executed \cref{line:reset_help_struct} at some time $T_1 < T$ implying $\helpobjectsymbol{}(o) = (t(o), NULL)$ at $T_1$.
        Since $o$ is never done, \Cref{lemma:not_done_implies_help_struct_fixed} implies $\helpobjectsymbol{}(o) = (t(o), NULL)$ from $T_1$ onwards.
        Therefore, since $T_1 < T$, $\helpobjectsymbol{}(o) = (t(o), NULL)$ from $T$ onwards as wanted.
        \qH{\Cref{claim:o_help_struct_fixed}}
    \end{proofindented}

    \begin{claimindented}{\ref{claim:stuck_implies_done}.2}\label{claim:always_trytodo}
        If a process $p$ executes \cref{line:g_a_query} at or after $T$ followed by executing \cref{line:g_r_cas} then said execution of \cref{line:g_r_cas} is for $o$.
    \end{claimindented}

    \begin{proofindented}
        Suppose $p$ executed \cref{line:g_a_query} at time $T_1 \geq T$ followed by executing \cref{line:g_r_cas}.
        Hence, $p$ executed \cref{line:help_pointer_query}, \ref{line:help}, and \ref{line:g_r_cas} at times $T_2$, $T_3$, and $T_4$, respectively, such that $T \leq T_1 < T_2 < T_3 < T_4$.
        Since $p$ executed \cref{line:g_a_query} at $T_1 \geq T$, $o$ is stored in $\announceobject{}$ at $T_1$.
        Hence, by \Cref{observation:g_a_is_well_formed} $p$ reads $(t(o), o, \helpobjectsymbol{}(o))$ in $\announceobject{}$ on \cref{line:g_a_query} at $T_1$.
        By \Cref{claim:o_help_struct_fixed} $p$ reads ($t(o)$, $NULL$) in $h(o)$ on \cref{line:help_pointer_query} at $T_2$.
        Consequently, \cref{line:help} is true at $T_3$.
        Therefore, $p$ executes \cref{line:g_r_cas} for $o$ at $T_4$ as wanted.
        \qH{\Cref{claim:always_trytodo}}
    \end{proofindented}
    
    \noindent We now show that this implies there is a time after which all executions of \cref{line:g_r_cas} are for $o$.
    
    \begin{claimindented}{\ref{claim:stuck_implies_done}.3}\label{claim:eventually_no_other_ops}
        There exists a time $T_1 \geq T$ such that from $T_1$ onwards all executions of \cref{line:g_r_cas} are for $o$.
    \end{claimindented}

    \begin{proofindented}
        Suppose for contradiction that for all times $T_1 \geq T$ there exists a time $T_2 \geq T_1$ where \cref{line:g_r_cas} is executed at $T_2$ for an operation other than $o$.
        Thus, there are infinitely many executions of \cref{line:g_r_cas} for an operation other than $o$. 
        This implies that infinitely many reads of $\announceobject{}$ in \cref{line:g_a_query} find that $\announceobject{}$ stores an operation other than $o$.
        This contradicts the assumption of \Cref{claim:stuck_implies_done}.
        \qH{\Cref{claim:eventually_no_other_ops}}
    \end{proofindented}

    \noindent We now prove the final intermediate claim which states that winning the GCAS on \cref{line:g_r_cas} for an operation $o$ implies $o$ is done.

    \begin{claimindented}{\ref{claim:stuck_implies_done}.4}\label{claim:g_r_cas_true_implies_done}
        If the execution of GCAS on $\stateobject{}$ on \cref{line:g_r_cas} at some time $T_2$ is for $o$ and returns true then $o$ is done at some time $T_3 > T_2$.
    \end{claimindented}
    
    \begin{proofindented}
        Suppose \cref{line:g_r_cas} is executed for $o$ and it returns true at time $T_2$.
        Consider the first iteration of the loop by $p(o_{\min})$ after $T_2$ which occurs by \Cref{def:stuck}.
        There are two cases:
        \begin{enumerate}
            \item[] \hspace{-15pt}\textbf{Case 1.} $p(o_{\min})$ reads ($t(o)$, $-$, $r$, $\helpobjectsymbol{}(o)$) in $\stateobject{}$ on \cref{line:g_r_query}.
            
            Hence, $p(o_{\min})$ executes \cref{line:help_pointer_cas} for $o$ at $T_3 > T_2$.
            By \Cref{lemma:copy_help_struct_implies_done} $o$ is done at $T_3$.
            
            \item[] \hspace{-15pt}\textbf{Case 2.} $p(o_{\min})$ reads a different value in $\stateobject{}$ on \cref{line:g_r_query}.
            
            Hence, by \Cref{observation:g_r_is_well_formed} $\stateobject{}$ stores an operation other than $o$ at some time $T_3 > T_2$.
            By \Cref{lemma:non_null_help_struct_response} $o$ is done at $T_3$.
        \end{enumerate}
        Therefore in all cases, $o$ is done at some time $T_3 > T_2$.
        \qH{\Cref{claim:g_r_cas_true_implies_done}}
    \end{proofindented}

    \def\length{13}
\def\start{3.5}

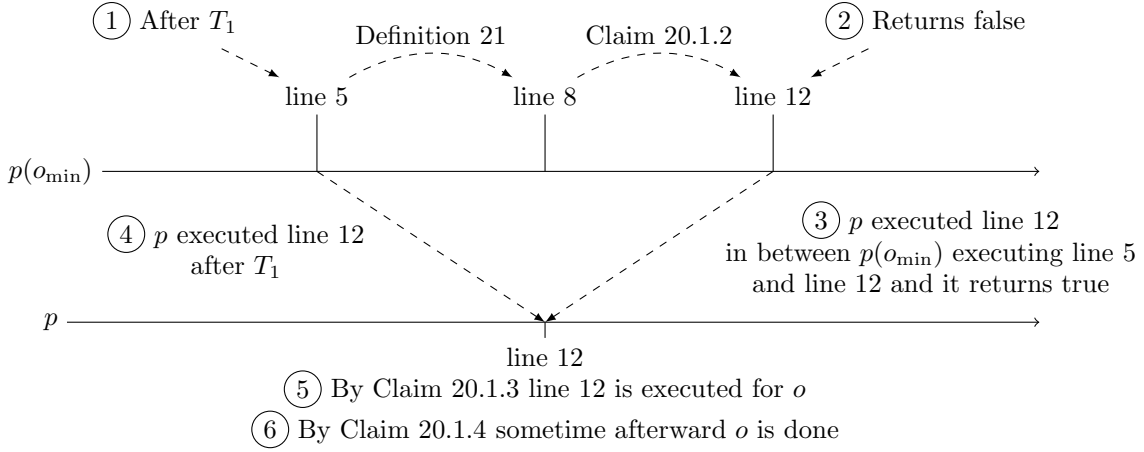
\begin{figure}
    \centering
    \begin{tikzpicture}
        \node (a) at (0, 2) {$p(o_{\min})$};
    
        \draw[->] (a.east) -- (\length, 2);

        \node (a) at (0, 0) {$p$};

        \draw[->] (a.east) -- (\length, 0);

        \node (a) at (\start, 3) {\cref{line:g_r_query}};

        \draw[-] (\start, 2) -- (a.south);

        \node (b) at (\start - 2, 3 + 1) {\circled{1} After $T_1$};

        \path[-latex, dashed] (b) edge (a);

        \node (b) at (1/2*\length, 3) {\cref{line:g_a_query}};

        \draw[-] (1/2*\length, 2) -- (b.south);

        \path[-latex, dashed] (a) edge [bend left] node[above] {\Cref{def:stuck}} (b);

        \node (c) at (\length - \start, 3) {\cref{line:g_r_cas}};

        \draw[-] (\length - \start, 2) -- (c.south);

        \path[-latex, dashed] (b) edge [bend left] node[above] {\Cref{claim:always_trytodo}} (c);

        \node (b) at (\length - \start + 2, 3 + 1) {\circled{2} Returns false};

        \path[-latex, dashed] (b) edge (c);

        \node[align=center] (a) at (1/2*\length, -1) {\cref{line:g_r_cas}\\ \circled{5} By \Cref{claim:eventually_no_other_ops} \cref{line:g_r_cas} is executed for $o$\\ \circled{6} By \Cref{claim:g_r_cas_true_implies_done} sometime afterward $o$ is done};

        \draw[-] (1/2*\length, 0) -- (a.north);
        \draw[-latex, dashed] (\length - \start, 2) -- node[right=.75cm, align=center] {\circled{3} $p$ executed \cref{line:g_r_cas}\\ in between $p(o_{\min})$ executing \cref{line:g_r_query}\\ and \cref{line:g_r_cas} and it returns true} (1/2*\length, 0);

        \draw[-latex, dashed] (\start, 2) -- node[left=.75cm, align=center] {\circled{4} $p$ executed \cref{line:g_r_cas}\\ after $T_1$} (1/2*\length, 0);
    \end{tikzpicture}
    \caption{Run constructed in the proof of \Cref{claim:stuck_implies_done} for the case where $p(o_{\min})$ receives false from the GCAS on \cref{line:g_r_cas}. \protect\circled{2} implies \protect\circled{3}. \protect\circled{1} \& \protect\circled{3} imply \protect\circled{4}. \protect\circled{4} implies \protect\circled{5}. \protect\circled{3} \& \protect\circled{5} imply \protect\circled{6}.}
    \label{fig:claim_1_proof}
\end{figure}

    We now complete the proof of \Cref{claim:stuck_implies_done}.
    Consider the first time $p(o_{\min})$ executes \cref{line:g_r_query} after $T_1$.
    Since $p(o_{\min})$ takes infinitely many steps in DoOp($o_{\min}$), it will execute \cref{line:g_a_query} afterward followed by existing \cref{line:g_r_cas}.
    Since $T_1 \geq T$, by \Cref{claim:always_trytodo} $p(o_{\min})$ executes \cref{line:g_r_cas} for $o$.
    If the GCAS on $\stateobject{}$ on \cref{line:g_r_cas} returns true then by \Cref{claim:g_r_cas_true_implies_done} $o$ is done sometime after.
    Otherwise, \cref{line:g_r_cas} returns false.
    Since \cref{line:g_r_cas} returns false some other process $p$ executed a GCAS on $\stateobject{}$ on \cref{line:g_r_cas} which returned true in between the time when $p(o_{\min})$ executed lines \ref{line:g_r_query} and \ref{line:g_r_cas}.
    Since $p(o_{\min})$ executed \cref{line:g_r_query} after $T_1$, $p$ executed \cref{line:g_r_cas} at $T_2 > T_1$.
    Thus, by \Cref{claim:eventually_no_other_ops} $p$ executed the GCAS on $\stateobject{}$ on \cref{line:g_r_cas} for $o$.
    Since that GCAS returned true, by \Cref{claim:g_r_cas_true_implies_done} $o$ is done sometime after $T_2$.
    Therefore in all cases, $o$ is done at some time, a contradiction.
    \qH{\Cref{claim:stuck_implies_done}}
\end{proof}

We now prove the second part, that $o_{\min}$ is stored in $\announceobject{}$ from some time onwards.
We start with an observation regarding operations with timestamps less than $o_{\min}$. Let:
\begin{definition}\label{def:less_than_omin}
    $O_< = \{o\ \vert\ t(o) < t(o_{\min})\}$.
\end{definition}
By \Cref{def:stuck} $o \in O$ if and only if $p(o)$ takes infinitely many steps in DoOp($o$) but $o$ does not complete.
By \Cref{def:omin}, $o_{\min}$ has the smallest timestamp in $O$.
So any process $p$ that invokes any operation $o'$ with a strictly smaller timestamp only takes finitely many steps in DoOp($o'$), because either $p$ crashes while executing DoOp($o'$) or because $p$ completes $o'$. Hence,

\begin{observation}\label{observation:crash_and_done}
    For all $o \in O_<$ there is a time after which $p(o)$ does not take a step during DoOp($o$).
\end{observation}

\begin{claim}{\ref{theorem:wait_free}.2}\label{claim:o_min_always}
    There exists a time $T$ such that for all times $T' \geq T$, $o_{\min}$ is stored in $\announceobject{}$ at $T'$.
\end{claim}

\begin{proof}
    We start by showing the following:
    
    \begin{claimindented}{\ref{claim:o_min_always}.1}\label{claim:all_in_o_get_out}
        There is a time $\T_1$ after which no operation $o \in O_<$ is stored in $\announceobject{}$.
    \end{claimindented}
    
    \begin{proofindented}
        Suppose for contradiction that for all times there is a later time when some $o \in O_<$ is stored in $\announceobject{}$.
        There are two cases.
        \begin{enumerate}

            \item[] \hspace{-15pt}\textbf{Case 1.} The operation stored in $\announceobject{}$ changes infinitely often.
            
            Since $O_<$ is finite, some operation $o \in O_<$ is written into $\announceobject{}$ infinitely often.
            Hence, $p(o)$ executes GCAS on $\announceobject{}$ on \cref{line:g_a_gcas} or \cref{line:g_a_cas} infinitely often during the execution of DoOp($o$), contradicting \Cref{observation:crash_and_done}.
        
            \item[] \hspace{-15pt}\textbf{Case 2.} The operation in $\announceobject{}$ changes finitely often.
            
            Since infinitely often some operation in $O_<$ is stored in $\announceobject{}$, this implies there exists an operation $o \in O_<$ which is stored in $\announceobject{}$ from some time $T$ onwards.
            By \Cref{claim:stuck_implies_done}, $o$ is done at some time $D$.
            Consider the first time $p(o_{\min})$ reads $\announceobject{}$ on \cref{line:g_a_query} after $\max(T, D)$; this must occur by \Cref{def:stuck}.
            Since $p(o_{\min})$'s execution of \cref{line:g_a_query} is after $T$, $p(o_{\min})$ read $(t(o), o, \helpobjectsymbol{}(o))$ in $\announceobject{}$.
            Moreover, since this execution is after $D$, $o$ is done by \Cref{observation:if_once_done_then_always_done}.
            Hence, by \Cref{lemma:if_observed_done_wont_try_to_do} $p(o_{\min})$ finds the condition of \cref{line:help} to be false.
            Thus $p(o_{\min})$ executes the GCAS on $\stateobject{}$ on \cref{line:g_a_cas}.
            If this returns true then sometime after $T$ $o_{\min}$ is stored in $\announceobject{}$ contradicting that $o$ is stored in $\announceobject{}$ from $T$ onwards.
            Otherwise \cref{line:g_a_cas} returns false.
            This implies that between when $p(o_{\min})$ executed lines \ref{line:g_a_query} and \ref{line:g_a_cas} $\announceobject{}$ stored an operation other than $o$.
            Since $p(o_{\min})$ executed \cref{line:g_a_query} at time $\max(T, D)$, $\announceobject{}$ stored an operation other than $o$ after $T$.
            This contradicts that $\announceobject{}$ stores $o$ from $T$ onwards.
        \end{enumerate}
        \qH{\Cref{claim:all_in_o_get_out}}
    \end{proofindented}

    \def\length{14}
\def\start{4}

\begin{figure}
    \centering
    \begin{tikzpicture}
        \node (a) at (0, 2) {$p(o_{\min})$};
    
        \draw[->] (a.east) -- (\length, 2);

        \node (a) at (\start, 3) {\cref{line:g_a_query}};

        \draw[-] (\start, 2) -- (a.south);

        \node (b) at (\start - 2, 3 + 1) {After $\max(T, D)$};

        \path[-latex, dashed] (b) edge (a);

        \node (b) at (1/2*\length, 3) {\cref{line:help}};

        \draw[-] (1/2*\length, 2) -- (b.south);

        \path[-latex, dashed] (a) edge [bend left] node[above] {\Cref{def:stuck}} (b);

        \node (c) at (\length - \start, 3) {\cref{line:g_a_cas}};

        \draw[-] (\length - \start, 2) -- (c.south);

        \path[-latex, dashed] (b) edge [bend left] node[above] {\Cref{lemma:if_observed_done_wont_try_to_do}} (c);

        \node (b) at (\length - \start + 2, 3 + 1) {Implies $\announceobject{}$ changes after $T$};

        \path[-latex, dashed] (b) edge (c);
    \end{tikzpicture}
    \caption{Run constructed in case 2 in the proof of \Cref{claim:all_in_o_get_out}. Since the operation $o \in O_<$ is done, $p(o_{\min})$ tries to store $o_{\min}$ into $\announceobject{}$. Whether it succeeds or not, $o$ is no longer in $\announceobject{}$ after $T$.}
    \label{fig:claim_2_1_case_1_proof}
\end{figure}
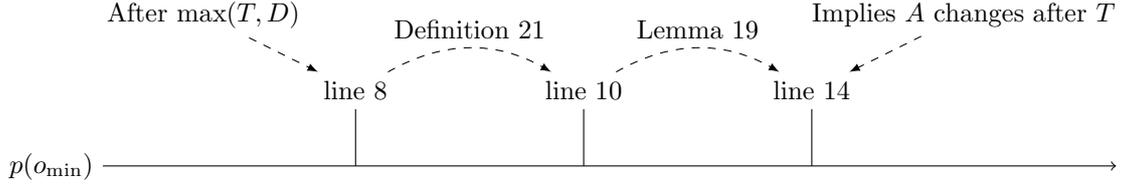
    
    \begin{claimindented}{\ref{claim:o_min_always}.2}\label{claim:o_min_writes}
        There exists a time $\T_2 \geq \T_1$ where $o_{\min}$ is stored in $\announceobject{}$ at $\T_2$.
    \end{claimindented}
    
    \begin{proofindented}
        Consider the first time $p(o_{\min})$ executes the GCAS on $\announceobject{}$ on \cref{line:g_a_gcas} at $\T_2 \geq \T_1$.
        If this GCAS returns true then $o_{\min}$ is stored in $\announceobject{}$ at $\T_2$.
        Otherwise, by the definition of GCAS, at time $\T_2$ some operation $o$ is stored in $\announceobject{}$ where $t(o) \leq t(o_{\min})$.
        However, since this execution is at or after $\T_1$, \Cref{claim:all_in_o_get_out} implies $t(o) \geq t(o_{\min})$.
        Thus $t(o) = t(o_{\min})$.
        Hence, by \Cref{observation:unique_timestamps} $o = o_{\min}$.
        Therefore, $o_{\min}$ is stored in $\announceobject{}$ at~$\T_2$.
        \qH{\Cref{claim:o_min_writes}}
    \end{proofindented}

    \def\length{14}
\def\start{3.325}

\begin{figure}
    \centering
    \begin{tikzpicture}
        \node (a) at (0, 2) {$p(o)$};
    
        \draw[->] (a.east) -- (\length, 2);

        \node (a) at (\start, 3) {\cref{line:g_a_cas}};

        \draw[-] (\start, 2) -- (a.south);

        \node (b) at ({\start + 1/3*(\length - 2*\start)}, 3) {\cref{line:help}};

        \draw[-] ({\start + 1/3*(\length - 2*\start)}, 2) -- (b.south);

        \node (c) at ({\start + 2/3*(\length - 2*\start)}, 3) {\cref{line:help_pointer_query}};

        \draw[-] ({\start + 2/3*(\length - 2*\start)}, 2) -- (c.south);

        \node (d) at ({\start + 3/3*(\length - 2*\start)}, 3) {\cref{line:g_a_query}};

        \draw[-] ({\start + 3/3*(\length - 2*\start)}, 2) -- (d.south);

        \node[align=center] (e) at (\start - 2, 5) {\cref{line:g_a_cas} returns true\\ for $o$ at $T > \T_2$};

        \path[-latex, dashed] (e) edge (a);

        \node[align=center] (f) at ({\start + 3/3*(\length - 2*\start) + 2}, 5) {$(t(o_{\min}), o_{\min}, \helpobjectsymbol{}(o_{\min}))$\\ was returned on \cref{line:g_a_query}};

        \path[-latex, dashed] (f) edge (d);

        \path[-latex, dashed] (e) edge node[above, align=center] {Since $T$ is the first time after $\T_2$\\ where $o_{\min}$ is not stored in $\announceobject{}$} (f);

        \path[-latex, dashed] (d) edge [bend right] node[above, align=center] {$\helpobjectsymbol{}(o_{\min}) = (\hat{t}, \hat{r})$} (c);

        \path[-latex, dashed] (a) edge [bend left] node[above, align=center] {$(\hat{t}, \hat{r}) \neq (t(o_{\min}), NULL)$} (b);

        \node[align=center] (g) at (1/2*\length, 1) {$\hat{t} = t(o_{\min})$ or $\hat{t} \neq t(o_{\min})$\\ which implies $o_{\min}$ is done at some time\\ contradicting \Cref{observation:stuck_implies_not_done}};

        \path[-latex, dashed]
            (b) edge [bend left] (g)
            (c) edge [bend right] (g)
            ;
    \end{tikzpicture}
    \caption{Run in the proof of \Cref{claim:o_min_always} after concluding $o$ was stored in $\announceobject{}$ at $T > \T_2$ because \cref{line:g_a_cas} returned true for $o$. This is because if \cref{line:g_a_gcas} returned true for $o$ then $o \in O_<$ contradicting \Cref{claim:all_in_o_get_out}.}
\end{figure}

    We now complete the proof of \Cref{claim:o_min_always}.
    Suppose for contradiction that for all times there exists a later time when $o_{\min}$ is not stored in $\announceobject{}$.
    Hence, there exists a time $T > \T_2$ where $o_{\min}$ is not stored in $\announceobject{}$ at $T$.
    Without loss of generality suppose this is the \emph{first time} after $\T_2$ where $o_{\min}$ is not stored in $\announceobject{}$.
    For $o_{\min}$ to not be stored in $\announceobject{}$ at $T$, a process that invoked an operation $o \neq o_{\min}$ must have executed a GCAS on $\announceobject{}$ on \cref{line:g_a_gcas} or \cref{line:g_a_cas} for $o$ which returned true at $T$.
    Since $T > \T_2 \geq \T_1$ by \Cref{claim:all_in_o_get_out} $o \notin O_<$.
    Hence, by \Cref{def:less_than_omin} $t(o) \geq t(o_{\min})$, and since $o \neq o_{\min}$, $t(o) > t(o_{\min})$ by \Cref{observation:unique_timestamps}.
    Consequently, the GCAS on $\announceobject{}$ on \cref{line:g_a_gcas} could not have returned true for $p(o)$ at $T$.
    Hence, it was a GCAS on $\announceobject{}$ on \cref{line:g_a_cas} that returned true for $p(o)$ at $T$.
    Since time $T$ is the first to change $\announceobject{}$ after $\T_2$, $p(o)$ read $(t(o_{\min}), o_{\min}, \helpobjectsymbol{}(o_{\min}))$ in $\announceobject{}$ on \cref{line:g_a_query}.
    Hence, $p(o)$ read $\helpobjectsymbol{}(o_{\min}) = (\hat{t}, \hat{r})$ on \cref{line:help_pointer_query}.
    Since $p(o)$ executed \cref{line:g_a_cas}, the condition of \cref{line:help} must have been false.
    Thus, $(\hat{t}, \hat{r}) \neq (t(o_{\min}), NULL)$.
    There are two cases.
    \begin{enumerate}
        \item[] \hspace{-15pt}\textbf{Case 1.} $\hat{t} = t(o_{\min})$.
        
        Hence, $\hat{r} \neq NULL$.
        Thus, $o_{\min}$ is done at some time.

        \item[] \hspace{-15pt}\textbf{Case 2.} $\hat{t} \neq t(o_{\min})$.
        
        Since $o_{\min}$ is stored in $\announceobject{}$ at $\T_2$, $p(o_{\min})$ executed the GCAS on $\announceobject{}$ on \cref{line:g_a_gcas} or \cref{line:g_a_cas} before or at $\T_2$.
        Hence, $p(o_{\min})$ set $h(o_{\min})$ to $(t(o_{\min}), NULL)$ on \cref{line:reset_help_struct} before $\T_2$.
        Thus, by \Cref{lemma:help_struct_change_implies_done} $o_{\min}$ is done sometime before $\T_2$.
    \end{enumerate}
    Therefore in all cases, $o_{\min}$ is done at some time, a contradiction to \Cref{observation:stuck_implies_not_done}.
    \qH{\Cref{claim:o_min_always}}
\end{proof}

Now return to the proof of \Cref{theorem:wait_free}.
Since \Cref{claim:o_min_always} satisfies the assumption of \Cref{claim:stuck_implies_done} $o_{\min}$ is done at some time, a contradiction to \Cref{observation:stuck_implies_not_done}.
\qH{\Cref{theorem:wait_free}}
\end{proof}

\section{Proof of Linearizability}

In this section, we prove that \Cref{alg:wait-free-simple} is linearizable.
We start with some intermediate lemmas.

\begin{lemma}\label{lemma:done_implies_help_struct_form}
    If an operation $o$ is done at time $T$ then for all times $T' \geq T$ $\helpobjectsymbol{}(o) \neq (t(o), NULL)$ at $T'$.
\end{lemma}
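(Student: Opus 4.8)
The plan is to unfold the definition of ``done'' to obtain a concrete witness and then reduce the statement to \Cref{lemma:once_response_is_not_null_only_owner_sets_to_null}. First I would apply \Cref{def:done} to the hypothesis: since $o$ is done at $T$, there is a time $T_1 \leq T$ at which $\helpobjectsymbol{}(o) = (t(o), r)$ for some $r \neq NULL$. The goal then is to show that at every $T' \geq T$ (and hence $T' \geq T_1$) the pair $\helpobjectsymbol{}(o)$ differs from $(t(o), NULL)$.

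I would dispose of the special case $o = NOOP$ first: by \Cref{def:op_state}, $\helpobjectsymbol{}(NOOP)$ is the immutable location holding $(0, \perp)$ and $t(NOOP) = 0$, so $\helpobjectsymbol{}(NOOP) = (t(NOOP), \perp)$ with $\perp \neq NULL$ at all times, and the claim is immediate. For $o \neq NOOP$, I would fix an arbitrary $T' \geq T$ and write $\helpobjectsymbol{}(o) = (t', r')$ at $T'$, splitting on the time field. If $t' \neq t(o)$, then $\helpobjectsymbol{}(o) \neq (t(o), NULL)$ simply because the first components differ. If $t' = t(o)$, I would invoke \Cref{lemma:once_response_is_not_null_only_owner_sets_to_null} with $p = p(o)$ and the witness from $T_1$: since $\helpobjectsymbol{}(o) = (t(o), r)$ with $r \neq NULL$ at $T_1$ and the time field of $\helpobjectsymbol{}(o)$ equals $t(o)$ at $T' \geq T_1$, the lemma yields that the response field of $\helpobjectsymbol{}(o)$ equals $r \neq NULL$ at $T'$, so once more $\helpobjectsymbol{}(o) \neq (t(o), NULL)$.

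I do not anticipate a genuine obstacle, as the heart of the argument is already packaged inside \Cref{lemma:once_response_is_not_null_only_owner_sets_to_null}, which itself rests on the monotonicity of the time field and the inability of \cref{line:help_pointer_cas} to overwrite a non-$NULL$ response (\Cref{observation:only_owner_changes_times_of_help_struct}). The only thing to be careful about is matching the lemma's hypotheses exactly---namely that the response object in question is $\helpobject{p}$ for a genuine owner $p = p(o)$---which is why the $NOOP$ case, whose response object has no process owner, must be peeled off separately before the lemma is applied.
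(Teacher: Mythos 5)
Your proof is correct and follows essentially the same route as the paper's: unfold \Cref{def:done} to obtain a witness time at which $\helpobjectsymbol{}(o) = (t(o), r)$ with $r \neq NULL$, then case split on the time field of $\helpobjectsymbol{}(o)$ at $T'$, invoking \Cref{lemma:once_response_is_not_null_only_owner_sets_to_null} when that field equals $t(o)$ and concluding immediately otherwise. Your separate peeling off of the $NOOP$ case is a small rigor-minded addition the paper omits (it applies \Cref{observation:only_owner_changes_times_of_help_struct} and \Cref{lemma:once_response_is_not_null_only_owner_sets_to_null} to $\helpobjectsymbol{}(o)$ without noting that these are stated for owned response objects), but it does not change the substance of the argument.
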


\begin{proof}
    Since $o$ is done at $T$ there exists a time $T^* \leq T$ where $\helpobjectsymbol{}(o) = (t(o), r)$ at $T^*$ such that $r \neq NULL$ by \Cref{def:done}.
    Since by \Cref{observation:only_owner_changes_times_of_help_struct} the time field of $\helpobjectsymbol{}(o)$ is monotonically increasing, at any time $T' \geq T^*$ it equals some timestamp $t \geq t(o)$.
    If $t = t(o)$ then by \Cref{lemma:once_response_is_not_null_only_owner_sets_to_null} $\helpobjectsymbol{}(o) = (t(o), r)$ at $T'$.
    Otherwise, $\helpobjectsymbol{}(o) = (t, -)$ with $t > t(o)$ at $T'$.
    Therefore, $\helpobjectsymbol{}(o) \neq (t(o), NULL)$ at $T'$.
    \qH{\Cref{lemma:done_implies_help_struct_form}}
\end{proof}

\begin{lemma}\label{lemma:double_execution_implies_concurrent}
    Suppose that:
    \begin{itemize}
        \item At time $T$, a process $p$ executes \cref{line:g_r_cas} for an operation $o$ and this execution returns true.
        \item At time $T' >T$, a process $p'$ executes \cref{line:g_r_cas} also for operation $o$.
        This occurs in an iteration of the loop in line 4.
    \end{itemize}
    Then $p'$ executed \cref{line:g_r_query} in this iteration at some time $T_1 < T$.
\end{lemma}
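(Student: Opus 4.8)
The plan is to argue by contradiction: assuming $p'$'s execution of \cref{line:g_r_query} in this iteration occurs at some $T_1 \geq T$, I will show that $o$ becomes done before $p'$ evaluates the condition of \cref{line:help} in the same iteration, which is impossible because that condition must hold for $p'$ to reach \cref{line:g_r_cas}.

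First I would unpack the control flow of the iteration in which $p'$ executes \cref{line:g_r_cas} at $T'$. The loop body forces $p'$ to have executed \cref{line:g_r_query}, \cref{line:g_a_query}, and \cref{line:help_pointer_query} earlier in this iteration, at times $T_1 < T_2 < T_3 < T'$, with \cref{line:help_pointer_cas} falling between $T_1$ and $T_2$. Since the value written by $p'$ on \cref{line:g_r_cas} is for $o$, its timestamp and pointer were read from $\announceobject{}$ on \cref{line:g_a_query}; using \Cref{observation:g_a_is_well_formed} together with the uniqueness of timestamps (\Cref{observation:unique_timestamps}), this read returned exactly $(t(o), o, \helpobjectsymbol{}(o))$ at $T_2$. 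Moreover, since $p'$ took the ``then'' branch (reaching \cref{line:apply} and \cref{line:g_r_cas}), the condition of \cref{line:help} was true at $T_3$, so $p'$ read $\helpobjectsymbol{}(o) = (t(o), NULL)$ on \cref{line:help_pointer_query} at $T_3$.

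Now suppose for contradiction that $T_1 \geq T$; since distinct steps occur at distinct times, in fact $T_1 > T$. Because $p$'s GCAS at $T$ returns true and is for $o$, the object $\stateobject{}$ stores $o$ at $T$. I then split on what $p'$ reads from $\stateobject{}$ at $T_1 > T$. If $p'$ reads $o$ (i.e.\ $\stateobject{}$ still stores $o$ at $T_1$), then by \Cref{observation:g_r_is_well_formed} the response field read is not $NULL$, so $p'$'s ensuing \cref{line:help_pointer_cas} is for $o$ by \Cref{observation:help_pointer_cas_is_well_formed}, and \Cref{lemma:copy_help_struct_implies_done} gives that $o$ is done at that execution, which lies between $T_1$ and $T_2$. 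Otherwise $p'$ reads some operation $o'' \neq o$; then $\stateobject{}$ stores $o$ at $T$ and $o'' \neq o$ at $T_1 > T$, so \Cref{lemma:non_null_help_struct_response} gives that $o$ is done at $T_1$.

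In either case $o$ is done at some time strictly before $T_3$, whence \Cref{lemma:done_implies_help_struct_form} yields $\helpobjectsymbol{}(o) \neq (t(o), NULL)$ at $T_3$, contradicting the read on \cref{line:help_pointer_query} established in the second paragraph. The main obstacle is the case analysis on the value $p'$ reads from $\stateobject{}$ at $T_1$, and in particular verifying that the ``done'' event it produces (either the copy on \cref{line:help_pointer_cas} or the state change witnessed via \Cref{lemma:non_null_help_struct_response}) is correctly ordered before the \cref{line:help} check at $T_3$; the timestamp-uniqueness step needed to identify the \cref{line:g_a_query} read as $o$ rather than merely as some operation sharing the timestamp $t(o)$ is a minor but necessary detail.
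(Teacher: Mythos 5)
Your proposal is correct and follows essentially the same argument as the paper's proof: the same contradiction setup (assume $T_1 \geq T$), the same case split on whether $\stateobject{}$ still stores $o$ at $T_1$, and the same chain of lemmas (\Cref{lemma:copy_help_struct_implies_done}, \Cref{lemma:non_null_help_struct_response}, \Cref{lemma:done_implies_help_struct_form}) to conclude that $o$ is done before the \cref{line:help} check, contradicting the read of $(t(o), NULL)$ on \cref{line:help_pointer_query}. Your explicit appeals to \Cref{observation:unique_timestamps} and to the distinctness of step times are minor refinements of details the paper leaves implicit.
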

\begin{proof}
    Suppose for a contradiction that $T < T_1$.
    Since $p'$ executed \cref{line:g_r_query} and \ref{line:g_r_cas} at times $T_1$ and $T'$, respectively, in the same iteration of the loop, $p'$ executed \cref{line:help_pointer_cas}, \ref{line:g_a_query}, \ref{line:help_pointer_query}, and \ref{line:help} at some times $T_2$, $T_3$, $T_4$, and $T_5$, respectively, such that $T_1 < T_2 < T_3 < T_4 < T_5 < T'$.
    Since $p'$'s executions of \cref{line:g_r_cas} at $T'$ was for $o$, $p'$ was returned $(t(o), o, \helpobjectsymbol{}(o))$ on \cref{line:g_a_query} at $T_3$.
    Hence, $p'$'s execution of \cref{line:help_pointer_query} at $T_4$ was for $\helpobjectsymbol{}(o)$.
    Since $p'$ executed \cref{line:g_r_cas} at $T'$, the condition of \cref{line:help} was true at $T_5$.
    Thus, $\helpobjectsymbol{}(o) = (t(o), NULL)$ at $T_4$.

    We now complete the proof by showing that $\helpobjectsymbol{}(o) \neq (t(o), NULL)$ at $T_4$.
    We do so by showing that $o$ is done at $T_4$.
    There are two cases.
    \begin{enumerate}
        \item[] \hspace{-15pt}\textbf{Case 1.} $o$ is stored in $\stateobject{}$ at $T_1$.
        
        Hence, $p'$'s execution of \cref{line:help_pointer_cas} at $T_2$ was for $o$.
        Thus, by \Cref{lemma:copy_help_struct_implies_done} $o$ is done at $T_2$.
        By \Cref{observation:if_once_done_then_always_done}, $o$ is done at $T_4$, as wanted.
        
        \item[] \hspace{-15pt}\textbf{Case 2.} $o$ is not stored in $\stateobject{}$ at $T_1$.

        Since \cref{line:g_r_cas} was executed for $o$ and returned true at $T$, $\stateobject{}$ stored $o$ at $T$.
        Since some $o' \neq o$ is stored in $\stateobject{}$ at time $T_1 > T$, by \Cref{lemma:non_null_help_struct_response} $o$ is done at $T_1$.
        By \Cref{observation:if_once_done_then_always_done}, $o$ is also done at $T_4 > T_1$ as wanted.
    \end{enumerate}
    Since $o$ is done at $T_4$, by \Cref{lemma:done_implies_help_struct_form} $\helpobjectsymbol{}(o) \neq (t(o), NULL)$ at $T_4$, a contradiction.
    \qH{\Cref{lemma:double_execution_implies_concurrent}}
\end{proof}

\newcommand{\Seq}{Seq}

\begin{lemma}\label{lemma:sequence_is_unique}
		Let $\Seq$ be the finite sequence of executions of \cref{line:g_r_cas} that returned true up to and including some time $T$.
    	Let $k$ be the length of $\Seq$ and $\Seq_i$ denote the $i$th element of $\Seq$.
		For all $i$ and $j$ such that $1 \leq i \leq k$ and $1 \leq j < i$, $\Seq_j$ is for a different operation than $\Seq_i$.
\end{lemma}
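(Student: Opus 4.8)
The plan is to prove the claim by strong induction on $i$, the position within $\Seq$. The base case $i=1$ is vacuous. For the inductive step I would assume that $\Seq_1, \ldots, \Seq_{i-1}$ are for pairwise distinct operations and establish the same for $\Seq_i$. Suppose, for contradiction, that $\Seq_i$ is for the same operation $o$ as some $\Seq_j$ with $j < i$; let $T_j < T_i$ be the times of $\Seq_j$ and $\Seq_i$, executed by processes $p$ and $p'$ respectively. The workhorse is \Cref{lemma:double_execution_implies_concurrent}: since $\Seq_j$ is a successful execution of \cref{line:g_r_cas} for $o$ at $T_j$ and $p'$ also executes \cref{line:g_r_cas} for $o$ at $T_i > T_j$, the lemma gives that $p'$ read $\stateobject{}$ on \cref{line:g_r_query}, in the iteration containing $\Seq_i$, at some time $T_1 < T_j$; call the value read $V^* = (t^*, s^*, r^*, roptr^*)$. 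Since $\Seq_i$ uses the $=$ comparator with old value $V^*$ and returns true, we have $\stateobject{} = V^*$ at $T_i$.

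I would then split on whether $o$ is stored in $\stateobject{}$ at $T_1$, i.e., on whether $t^* = t(o)$. In the first case $V^*$ is for $o$, so $p'$'s subsequent execution of \cref{line:help_pointer_cas} is for $o$ (by \Cref{observation:help_pointer_cas_is_well_formed}) at some $T_2 > T_1$, whence \Cref{lemma:copy_help_struct_implies_done} makes $o$ done at $T_2$. But to reach \cref{line:g_r_cas} at all, $p'$ must find the condition of \cref{line:help} true, i.e.\ $\helpobjectsymbol{}(o) = (t(o), NULL)$ when it reads $\helpobjectsymbol{}(o)$ on \cref{line:help_pointer_query} at some $T_4 > T_2$; this contradicts \Cref{lemma:done_implies_help_struct_form}, which forbids $\helpobjectsymbol{}(o) = (t(o), NULL)$ at any time after $o$ is done. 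This disposes of the first case.

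The hard part is the second case, where $V^*$ is for some $\tilde o \neq o$ (so $t^* \neq t(o)$); this is precisely the ABA hazard that the timestamps are designed to rule out, and the key is to leverage the induction hypothesis to convert a repeated \emph{value} of $\stateobject{}$ into a repeated \emph{operation}. Since $\Seq_j$ writes a value for $o$ with timestamp $t(o) \neq t^*$, immediately after $T_j$ we have $\stateobject{} \neq V^*$, yet $\stateobject{} = V^*$ again at $T_i$; as $\stateobject{}$ changes only through successful executions of \cref{line:g_r_cas}, the most recent one before $T_i$ that establishes $V^*$ occurs at some $T_m \in (T_j, T_i)$, and by \Cref{observation:g_r_is_well_formed} together with \Cref{observation:unique_timestamps} it is for $\tilde o$; being before $T_i$ it is some $\Seq_{i_2}$ with $i_2 < i$. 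Symmetrically, the value $V^*$ present at $T_1 < T_j$ is either established by an earlier successful \cref{line:g_r_cas} for $\tilde o$ at some $T_{m'} \leq T_1 < T_m$ — a second occurrence $\Seq_{i_1}$ of $\tilde o$ with $i_1 < i_2 < i$, contradicting the induction hypothesis — or it is the initial value of $\stateobject{}$, forcing $\tilde o = NOOP$ and $t^* = 0$. The latter is impossible, since a successful \cref{line:g_r_cas} for $NOOP$ would require passing \cref{line:help} with $\helpobjectsymbol{}(NOOP) = (0, NULL)$, whereas $\helpobjectsymbol{}(NOOP)$ is permanently $(0, \perp)$ and $\perp \neq NULL$. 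Either way the inductive step closes. The main subtlety to get right is this ABA argument: recognizing that the timestamp field embedded in $\stateobject{}$'s value is exactly what makes a returned value entail a repeated linearized operation, which the induction hypothesis rules out.
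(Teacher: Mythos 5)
Your proof is correct and follows essentially the same route as the paper's: strong induction on the position in the sequence, an appeal to \Cref{lemma:double_execution_implies_concurrent} to push the \cref{line:g_r_query} read before the earlier successful execution, and then the ABA-style argument that the re-appearance of the value read on \cref{line:g_r_query} forces two earlier successful executions of \cref{line:g_r_cas} for the same operation, contradicting the induction hypothesis. The only differences are minor: the paper dispenses with your Case~1 (the value read being for $o$ itself) by using the induction hypothesis to conclude that the operation stored in $\stateobject{}$ at that read differs from $o$, and it rules out the $NOOP$ alternative by observing that any operation for which \cref{line:g_r_cas} succeeds must have been announced and hence is not $NOOP$, whereas you argue directly from $\helpobjectsymbol{}(NOOP)$ being permanently $(0,\perp)$ with $\perp \neq NULL$.
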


\begin{proof}
    We prove this by strong induction on $i$.
    Let:
    \begin{align*}
        P(i) = \text{for all $j \in [1, i)$ $\Seq_j$ is for a different operation than $\Seq_i$}
    \end{align*}
    \begin{enumerate}
        \item[] \hspace{-20pt}\textbf{BC.} $P(1)$.

        This is immediate because the interval $[1, 1)$ is empty.

        \item[] \hspace{-15pt}\textbf{IC.} $\forall k' \in [1, k)\ ([\forall i \in [1, k']\ P(i)] \implies P(k' + 1))$.

        Suppose that, for some $k' \in [1, k)$, $\forall i \in [1, k']\ P(i)$ holds: this is the inductive hypothesis.
        We will show this implies $P(k' + 1)$ holds.
        Suppose for a contradiction that $P(k' + 1)$ does not hold.
        Thus, there exists a $j^* \in [1, k' + 1)$ such that $\Seq_{j^*}$ is for the same operation as $\Seq_{k' + 1}$.
        Let $o$ be the operation that $\Seq_{j^*}$ and $\Seq_{k' + 1}$ executed \cref{line:g_r_cas} for.
        Let $p$ be the process that executed \cref{line:g_r_cas} at $\Seq_{k' + 1}$.
        Let the times that $\Seq_{j^*}$ and $\Seq_{k' + 1}$ are executed at be $T_{j^*}$ and $T_{k' + 1}$, respectively.
        Since $j^* < k' + 1$, by the definition of $\Seq$, $T_{j^*} < T_{k' + 1}$.
        In the same iteration of the loop that $p$ executed $\Seq_{k' + 1}$, let the time $p$ executed \cref{line:g_r_query} at be $T_1 < T_{k' + 1}$.
        Since  $T_{j^*} < T_{k' + 1}$ and both $\Seq_{j^*}$ and $\Seq_{k' + 1}$ are for $o$, by \Cref{lemma:double_execution_implies_concurrent}, $T_1 < T_{j^*}$.
        Let $o_1$ be the operation stored in $\stateobject{}$ at $T_1$.
        The remainder of the proof shows that there are two different executions of \cref{line:g_r_cas} for $o_1$ that return true before $\Seq_{k' + 1}$ in $\Seq$ contradicting the inductive hypothesis.
        We start by showing that $o_1 \neq o$.
        There are two cases.
        \begin{enumerate}
            \item[] \hspace{-15pt}\textbf{Case 1.} $o_1 = NOOP$.

            Since \cref{line:g_r_cas} was executed for $o$, $o$ was stored in $\announceobject{}$ at some time, and hence some process executed DoOp($o$).
            Hence, $o \neq NOOP$ by \Cref{assumption:1}.
            Therefore $o_1 \neq o$ as wanted.
            
            \item[] \hspace{-15pt}\textbf{Case 2.} $o_1 \neq NOOP$.

            Since $o_1$ is stored in $\stateobject{}$ at $T_1$, and $o_1 \neq NOOP$, some process must have executed \cref{line:g_r_cas} for $o_1$ that returned true before $T_1$.
            Since $T_1 < T_{j^*}$, by the definition of $\Seq$, this implies that
            $\Seq_{j}$ for some $j < j^*$ was for $o_1$.
            Since $j^* < k' + 1$ by the inductive hypothesis $P(j^*)$ holds.
            Hence, $\Seq_{j}$ is for a different operation than $\Seq_{j^*}$ since $j < j^*$.
            Therefore, $o_1 \neq o$ as wanted.
        \end{enumerate}

        We now return to the proof of the inductive case.
        Since $o_1$ is stored in $\stateobject{}$ at $T_1$, and $\Seq_{k' + 1}$ returns true at $T_{k' + 1}$, $o_1$ must be stored in $\stateobject{}$ at the time $\Seq_{k' + 1}$ is executed.
        Thus, since $o$ is stored in $\stateobject{}$ at $T_{j^*}$, $o_1 \neq o$, and $T_1 < T_{j^*} < T_{k' + 1}$, \cref{line:g_r_cas} must be executed for $o_1$ and return true some time in between $T_{j^*}$ and $T_{k' + 1}$.
        This implies: (1) by the definition of $\Seq$, $\Seq_{j'}$ for some $j^* < j' < k' + 1$ is for $o_1$ and (2) $o_1$ was stored in $\announceobject{}$ at some time, hence some process executed DoOp($o_1$), and thus $o_1 \neq NOOP$ by \Cref{assumption:1}.
        Since $o_1 \neq NOOP$ and $o_1$ is stored in $\stateobject{}$ at $T_1$, some process must have must have executed \cref{line:g_r_cas} for $o_1$ that returned true before $T_1$.
        Since $T_1 < T_{j^*}$ by the definition of $\Seq$ this implies that $\Seq_{j}$ for some $j < j^*$ was for $o_1$.
        Hence, both $\Seq_{j}$ and $\Seq_{j'}$ are for the same operation $o_1$.
        Since $j < j' < k + 1$ this contradicts the inductive hypothesis that $P(j')$ holds.
    \end{enumerate}
    Therefore by strong induction, we have $\forall i \in [1, k]\ P(i)$ holds as wanted.
    \qH{\Cref{lemma:sequence_is_unique}}
\end{proof}

\begin{lemma}\label{lemma:linearized_once_implies_never_again}
    If \cref{line:g_r_cas} is executed for an operation $o$ and it returns true at some time $T$, then all executions of \cref{line:g_r_cas} for $o$ after time $T$ return false.
\end{lemma}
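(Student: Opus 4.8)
The plan is to derive this as an immediate corollary of \Cref{lemma:sequence_is_unique}, which already establishes that no two successful executions of \cref{line:g_r_cas} recorded up to a fixed time are for the same operation. So rather than reason about the contents of $\stateobject{}$ directly, I would argue by contradiction and appeal to the uniqueness of operations in the sequence $\Seq$.

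Concretely, I would suppose for contradiction that, in addition to the successful execution of \cref{line:g_r_cas} for $o$ at time $T$, there is a later execution of \cref{line:g_r_cas} for $o$ at some time $T' > T$ that also returns true. I would then instantiate \Cref{lemma:sequence_is_unique} with the time $T'$, obtaining the finite sequence $\Seq$ of all executions of \cref{line:g_r_cas} that returned true up to and including $T'$. The first key step is to observe that both executions belong to $\Seq$: the execution at $T$ returns true and $T \leq T'$, and the execution at $T'$ returns true by assumption. Since the elements of $\Seq$ are ordered by the time they occur and $T < T'$, the execution at $T$ appears at some index $j$ strictly smaller than the index $i$ of the execution at $T'$.

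The second step is simply to apply \Cref{lemma:sequence_is_unique} to the indices $i$ and $j < i$: it asserts that $\Seq_j$ is for a different operation than $\Seq_i$. But $\Seq_j$ is the execution at $T$, which is for $o$, and $\Seq_i$ is the execution at $T'$, which is also for $o$ --- a contradiction. Hence no execution of \cref{line:g_r_cas} for $o$ after $T$ can return true.

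I do not expect a genuine obstacle here, since \Cref{lemma:sequence_is_unique} does all the heavy lifting; the only points requiring care are (i) confirming that both executions are indeed recorded in the same sequence $\Seq$ (which needs $T' \geq T$, available since $T' > T$), and (ii) that distinct steps occur at distinct times in $I$, so the two executions occupy distinct positions $j < i$ in $\Seq$. A self-contained alternative would track the contents of $\stateobject{}$ between $p'$'s read on \cref{line:g_r_query} and its GCAS on \cref{line:g_r_cas} using \Cref{lemma:double_execution_implies_concurrent}, \Cref{lemma:non_null_help_struct_response}, and \Cref{lemma:done_implies_help_struct_form}, but this merely re-derives the content already packaged in \Cref{lemma:sequence_is_unique}, so I would prefer the short corollary.
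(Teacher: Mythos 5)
Your proposal is correct and follows essentially the same route as the paper's own proof: both argue by contradiction, take the sequence $\Seq$ of successful executions of \cref{line:g_r_cas} up to time $T'$, observe that both executions for $o$ appear in it, and invoke \Cref{lemma:sequence_is_unique} to derive the contradiction. Your explicit check that the two executions occupy distinct, ordered indices $j < i$ is a minor point the paper glosses over, but the argument is the same.
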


\begin{proof}
    Suppose for a contradiction that \cref{line:g_r_cas} is executed twice for an operation $o$ and it returns true both at times $T$ and $T'$ such that $T < T'$.
    Consider the sequence $\Seq$ constructed in \Cref{lemma:sequence_is_unique} up to $T'$.
    Recall that $\Seq$ is the sequence of all executions of \cref{line:g_r_cas} that return true before or at time $T'$.
    Since, by assumption \cref{line:g_r_cas} is executed twice for $o$ at times $T$ and $T'$ such that $T < T'$, both of these executions for $o$ are in $\Seq$; let them be $\Seq_i$ and $\Seq_j$, respectively.
    \Cref{lemma:sequence_is_unique} asserts that $\Seq_i$ and $\Seq_j$ are for different operations.
    However, $\Seq_i$ and $\Seq_j$ are both for $o$, a contradiction.
    \qH{\Cref{lemma:linearized_once_implies_never_again}}
\end{proof}

\Cref{lemma:linearized_once_implies_never_again} immediately implies:

\begin{corollary}\label{corollary:linearized_at_most_once}
    For all operations $o$, \cref{line:g_r_cas} is executed for $o$ and returns true at most once.
\end{corollary}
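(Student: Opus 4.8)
The plan is to obtain \Cref{corollary:linearized_at_most_once} directly from \Cref{lemma:linearized_once_implies_never_again} via a one-step argument by contradiction. First I would unpack the phrase ``at most once'': it asserts that there do not exist two \emph{distinct} executions of \cref{line:g_r_cas} for the same operation $o$ that both return true. So I would negate this and suppose, toward a contradiction, that for some operation $o$ there are two such executions, occurring at times $T$ and $T'$ with $T < T'$, both returning true.

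Next I would invoke \Cref{lemma:linearized_once_implies_never_again} using the \emph{earlier} execution (the one at time $T$). The lemma guarantees that every execution of \cref{line:g_r_cas} for $o$ occurring after time $T$ returns false. Since $T' > T$, this applies to the execution at $T'$, forcing it to return false. This directly contradicts our assumption that the execution at $T'$ returns true, which completes the argument.

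I do not expect any genuine obstacle here: all the substantive work has already been discharged in \Cref{lemma:linearized_once_implies_never_again} (which itself rests on the uniqueness of the true-returning sequence established in \Cref{lemma:sequence_is_unique} through \Cref{lemma:double_execution_implies_concurrent}). The only point requiring a moment of care is the bookkeeping translation between ``at most once'' and ``no two distinct true-returning executions,'' ensuring that the negation supplies exactly the ordered pair $T < T'$ needed to match the hypothesis of \Cref{lemma:linearized_once_implies_never_again}. Given that, the corollary is an immediate formal consequence.
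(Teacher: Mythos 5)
Your proposal is correct and matches the paper exactly: the paper derives \Cref{corollary:linearized_at_most_once} as an immediate consequence of \Cref{lemma:linearized_once_implies_never_again}, and your argument simply makes that one-step contradiction explicit by ordering the two hypothesized true-returning executions as $T < T'$ and applying the lemma to the earlier one.
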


\begin{lemma}\label{lemma:complete_implies_linearized}
    If, during the execution of DoOp($o$), process $p(o)$ executes \cref{line:op_done} at some time $T$  then \cref{line:g_r_cas} was executed for $o$ and it returned true at some time $T' < T$.
\end{lemma}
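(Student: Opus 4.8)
The plan is to unwind the chain of events that must precede $p(o)$ exiting its loop and returning on \cref{line:op_done}. The hypothesis is exactly that $p(o)$ executes \cref{line:op_done} at $T$ within DoOp($o$), so $o$ is complete at $T$ by \Cref{def:complete}; \Cref{lemma:complete_implies_done} then gives that $o$ is done at some time $T_2 < T$, and \Cref{def:done} supplies a time $T^* \le T_2$ at which $\helpobjectsymbol{}(o) = (t(o), r)$ for some $r \neq NULL$. The goal is to show that this non-$NULL$ response can only have arisen from a prior successful linearizing GCAS for $o$.

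The core of the argument is to identify how $\helpobjectsymbol{}(o)$ came to hold a non-$NULL$ response at $T^*$. The only lines that write a response object are \cref{line:reset_help_struct}, which stores $NULL$ in the response field, and \cref{line:help_pointer_cas}. Since $\helpobjectsymbol{}(o).response = r \neq NULL$ at $T^*$, the most recent write to $\helpobject{p(o)}$ at or before $T^*$ must be an execution of \cref{line:help_pointer_cas}, at some time $T_3 \le T^*$, and it returned true. By \Cref{observation:help_pointer_cas_is_well_formed} this execution has the form $\text{GCAS}(=, \helpobjectsymbol{}(o'), (t(o'), NULL), (t(o'), r'))$ for some operation $o'$; being the most recent write, the value it installs, $(t(o'), r')$, is exactly the value $(t(o), r)$ present at $T^*$, so $t(o') = t(o)$, and since both timestamps are finite \Cref{observation:unique_timestamps} forces $o' = o$. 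Thus this \cref{line:help_pointer_cas} is \emph{for} $o$.

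Finally I would reuse the reasoning from the proof of \Cref{lemma:copy_help_struct_implies_done}: because this \cref{line:help_pointer_cas} is for $o$, the process executing it read $(t(o), -, r, \helpobjectsymbol{}(o))$ from $\stateobject{}$ on \cref{line:g_r_query} at some time $T_4 < T_3$, so $\stateobject{}$ stored $o$ at $T_4$. Since $o$ is invoked via DoOp it differs from $NOOP$ by \Cref{assumption:1}, and $\stateobject{}$ is initially $NOOP$ and changes only on \cref{line:g_r_cas}; hence some process executed \cref{line:g_r_cas} for $o$ that returned true at a time $T' < T_4$. Chaining $T' < T_4 < T_3 \le T^* \le T_2 < T$ yields the claimed $T' < T$.

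The one delicate step is the middle paragraph: pinning down that the response field of $\helpobjectsymbol{}(o)$ can only become non-$NULL$ through a successful \cref{line:help_pointer_cas} that is specifically for $o$. This relies on the well-formedness of \cref{line:help_pointer_cas} (\Cref{observation:help_pointer_cas_is_well_formed}), on the fact that only the owner ever changes the time field (\Cref{observation:only_owner_changes_times_of_help_struct}), and on timestamp uniqueness (\Cref{observation:unique_timestamps}); once that is established, the remaining ordering argument is routine and mirrors the later part of \Cref{lemma:copy_help_struct_implies_done}.
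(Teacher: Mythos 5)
Your proof is correct and follows essentially the same route as the paper's: trace the non-$NULL$ response in $\helpobjectsymbol{}(o)$ back to a successful execution of \cref{line:help_pointer_cas} for $o$, from there to $o$ being stored in $\stateobject{}$, and from there to a successful execution of \cref{line:g_r_cas} for $o$. The only differences are cosmetic: you enter via \Cref{lemma:complete_implies_done} and \Cref{def:done} rather than directly from the loop-exit condition on \cref{line:loop_start}, and you make explicit (via the most-recent-write argument together with \Cref{observation:help_pointer_cas_is_well_formed} and \Cref{observation:unique_timestamps}) the step the paper leaves implicit, namely that the successful \cref{line:help_pointer_cas} must be for $o$ itself.
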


\begin{proof}
    Suppose $p(o)$ executes \cref{line:op_done} at some time $T$ during the execution of DoOp($o$).
    This implies at some time $T_1 < T$, $p(o)$ executed \cref{line:loop_start} and the condition was false.
    Hence, $\helpobjectsymbol{}(o) \neq (t(o), NULL)$ at $T_1$.
    This implies that \cref{line:help_pointer_cas} was executed at some time $T_2 < T_1$ in the form of $\text{GCAS}(=, \helpobjectsymbol{}(o), (t(o), NULL), (t(o), r))$ where $r \neq NULL$ by \Cref{observation:help_pointer_cas_is_well_formed}.
    Thus, $o$ was stored in $\stateobject{}$ at some time $T_3 < T_2$.
    Therefore, \cref{line:g_r_cas} was executed for $o$ and it returned true at some time $T_4 < T_3$ which is before $T$ by transitivity.
    \qH{\Cref{lemma:complete_implies_linearized}}
\end{proof}

\Cref{lemma:linearized_once_implies_never_again} and \Cref{lemma:complete_implies_linearized} immediately imply:

\begin{corollary}\label{corollary:complete_implies_linearized_exactly_once}
    If operation $o$ is complete then there is exactly one execution of \cref{line:g_r_cas} for $o$ that returns true.
\end{corollary}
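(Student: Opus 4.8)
The plan is to obtain this corollary by splitting it into an existence claim and a uniqueness claim, each of which is handed to us by one of the two immediately preceding results. Since $o$ is complete, \Cref{def:complete} guarantees a concrete time $T$ at which $p(o)$ executed \cref{line:op_done} within DoOp($o$); I would feed exactly this time into \Cref{lemma:complete_implies_linearized}, which then yields an execution of \cref{line:g_r_cas} for $o$ that returned true at some time $T' < T$. This establishes that at least one such execution exists.

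For uniqueness I would invoke \Cref{corollary:linearized_at_most_once} --- the packaging of \Cref{lemma:linearized_once_implies_never_again} --- which states that \cref{line:g_r_cas} is executed for $o$ and returns true at most once. Combining ``at least once'' from the existence step with ``at most once'' from this step gives exactly one execution of \cref{line:g_r_cas} for $o$ that returns true, as required.

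There is essentially no obstacle here: the corollary is a straightforward conjunction of the two cited statements, and indeed the paper flags it as an immediate consequence. The only point demanding a modicum of care is matching the hypotheses precisely --- namely, extracting from the definition of completeness the \emph{actual} execution of \cref{line:op_done} at a specific time, so that \Cref{lemma:complete_implies_linearized} applies verbatim rather than reasoning loosely from the informal phrase ``$o$ completes.'' Once that bookkeeping is in place, the two-line argument closes.
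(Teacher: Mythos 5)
Your proposal is correct and matches the paper exactly: the paper presents this corollary as an immediate consequence of \Cref{lemma:linearized_once_implies_never_again} and \Cref{lemma:complete_implies_linearized}, which is precisely your uniqueness-plus-existence decomposition (your citing \Cref{corollary:linearized_at_most_once} rather than \Cref{lemma:linearized_once_implies_never_again} directly is immaterial, since that corollary is just the lemma repackaged). Your note about unpacking \Cref{def:complete} to obtain the actual execution of \cref{line:op_done} is the right bookkeeping and is implicit in the paper's ``immediately imply.''
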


\begin{lemma}\label{lemma:response_is_from_linearization}
    Suppose operation $o$ is complete.
    By \Cref{corollary:complete_implies_linearized_exactly_once} there is exactly one execution of \cref{line:g_r_cas} for $o$ that returns true.
    By \Cref{observation:g_r_is_well_formed} this execution is of the form $\text{GCAS}(=, \stateobject{}, -, (t(o), -, r, \helpobjectsymbol{}(o)))$.
    Then $o$'s response is $r$.
\end{lemma}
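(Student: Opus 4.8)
The plan is to show that the value returned by $p(o)$ on \cref{line:op_done}, namely the $response$ field of $\helpobjectsymbol{}(o)$ read at that step, equals the response $r$ written by the unique successful execution of \cref{line:g_r_cas} for $o$. I would argue in three stages: first pin down which value can ever be copied into the $response$ field of $\helpobjectsymbol{}(o)$, then conclude that once $o$ is done this value must be $r$, and finally transfer this to the moment $p(o)$ reads $\helpobjectsymbol{}(o).response$ on \cref{line:op_done}.

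First I would establish the key fact: every execution of \cref{line:help_pointer_cas} for $o$ writes exactly $(t(o), r)$. Such an execution is, by \Cref{observation:help_pointer_cas_is_well_formed}, of the form $\text{GCAS}(=, \helpobjectsymbol{}(o), (t(o), NULL), (t(o), r'))$, and the process performing it read $(t(o), -, r', \helpobjectsymbol{}(o))$ from $\stateobject{}$ on \cref{line:g_r_query} earlier in the same iteration. By \Cref{observation:g_r_is_well_formed}, together with \Cref{observation:unique_timestamps} since the time field $t(o)$ identifies $o$, $\stateobject{}$ stored $o$ at that read. But $o$ is written into $\stateobject{}$ only by a successful \cref{line:g_r_cas} for $o$, and by \Cref{corollary:linearized_at_most_once} there is at most one such execution; by hypothesis it writes $(t(o), -, r, \helpobjectsymbol{}(o))$. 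Since $\stateobject{}$ changes only through a successful \cref{line:g_r_cas}, which always installs a different operation, and it still stored $o$ at the read, $\stateobject{}$ stored $o$ continuously from that unique write up to the read, so its response field was unchanged and $r' = r$.

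Next, since $o$ is complete, \Cref{lemma:complete_implies_done} gives that $o$ is done, so by \Cref{def:done} there is a time at which $\helpobjectsymbol{}(o) = (t(o), \hat{r})$ with $\hat{r} \neq NULL$. While the time field equals $t(o)$, the response field is set to a non-$NULL$ value only on \cref{line:help_pointer_cas} (the owner's \cref{line:reset_help_struct} writes $NULL$), so $\hat{r}$ was written by some execution of \cref{line:help_pointer_cas} for $o$; by the previous paragraph $\hat{r} = r$. Hence $\helpobjectsymbol{}(o) = (t(o), r)$ at that time.

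Finally, I would transfer this to \cref{line:op_done}. At the time $T$ that $p(o)$ executes \cref{line:op_done} it is still inside DoOp($o$), so by \Cref{observation:only_owner_changes_times_of_help_struct}, since only $p(o)$ changes the time field, each change sets a fresh value, and the \cref{line:reset_help_struct} of this DoOp set it to $t(o)$, we have $\helpobjectsymbol{}(o).time = t(o)$ at $T$. Since $\helpobjectsymbol{}(o) = (t(o), r)$ with $r \neq NULL$ at an earlier time, \Cref{lemma:once_response_is_not_null_only_owner_sets_to_null} yields $\helpobjectsymbol{}(o).response = r$ at $T$, which is exactly the returned value, so $o$'s response is $r$. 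I expect the main obstacle to be the first stage: it is where one must invoke the at-most-once linearization (\Cref{corollary:linearized_at_most_once}) to argue that the response field of $\stateobject{}$ is pinned to $r$ throughout the interval in which $\stateobject{}$ stores $o$, ruling out any stale or alternative response being copied into $\helpobjectsymbol{}(o)$.
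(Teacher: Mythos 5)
Your proposal is correct and takes essentially the same route as the paper's proof: both trace the value returned on \cref{line:op_done} back through a successful execution of \cref{line:help_pointer_cas} to a read of $\stateobject{}$ on \cref{line:g_r_query}, and hence to the unique successful execution of \cref{line:g_r_cas} for $o$ (\Cref{corollary:complete_implies_linearized_exactly_once}), which pins the copied response to $r$. The only difference is presentational: the paper runs this chain backwards as a proof by contradiction (assuming the response is $r' \neq r$ and exhibiting a second successful execution of \cref{line:g_r_cas} for $o$), whereas you run it forward and additionally spell out the stability of $\helpobjectsymbol{}(o)$ up to \cref{line:op_done} via \Cref{lemma:once_response_is_not_null_only_owner_sets_to_null} and \Cref{observation:only_owner_changes_times_of_help_struct}, details the paper leaves implicit.
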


\begin{proof}
    Suppose for a contradiction that $o$'s response is $r' \neq r$.
    Hence, $p(o)$ returned $r'$ on \cref{line:op_done}.
    This implies $\helpobjectsymbol{}(o) = (t(o), r')$.
    Hence, some process executed \cref{line:help_pointer_cas} in the form $\text{GCAS}(=, \helpobjectsymbol{}(o), (t(o), NULL), (t(o), r'))$ that returned true.
    Thus, this process was returned $(t(o), -, r', \helpobjectsymbol{}(o))$ on \cref{line:g_r_query}.
    This implies \cref{line:g_r_cas} was executed for $o$ in the form $\text{GCAS}(=, \stateobject{}, -, (t(o), -, r', \helpobjectsymbol{}(o)))$ and it returned true.
    However, the single execution of \cref{line:g_r_cas} for $o$ that returned true is assumed to be of the form $\text{GCAS}(=, \stateobject{}, -, (t(o), -, r, \helpobjectsymbol{}(o)))$.
    This implies \cref{line:g_r_cas} was executed for $o$ and returned true twice, a contradiction to \Cref{corollary:complete_implies_linearized_exactly_once}.
    \qH{\Cref{lemma:response_is_from_linearization}}
\end{proof}

\begin{theorem}\label{theorem:linearizable}
    \Cref{alg:wait-free-simple} is linearizable.
\end{theorem}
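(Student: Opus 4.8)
The plan is to construct an explicit linearization from the sequence of successful executions of Line~\ref{line:g_r_cas} (the state-updating GCAS). By Definition~\ref{def:op_state} and the well-formedness Observations~\ref{observation:g_a_is_well_formed} and~\ref{observation:g_r_is_well_formed}, each successful execution of Line~\ref{line:g_r_cas} is ``for'' a uniquely determined operation $o$, and by Corollary~\ref{corollary:linearized_at_most_once} each operation has at most one such successful execution. This gives a canonical candidate linearization order: order the operations by the time their successful Line~\ref{line:g_r_cas} executes. I would first fix an arbitrary complete history $H$ (using a completion that discards pending operations that were never linearized and adds responses to those that were), and then verify the three things a valid linearization requires: that the chosen linearization point lies inside each operation's execution interval, that the linearization is well-defined and covers exactly the operations that must appear, and that the resulting sequential execution conforms to type $\mathcal{T}$.

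The key steps, in order, would be as follows. \textbf{(1)~Linearization points are in-interval.} For a complete operation $o$, Corollary~\ref{corollary:complete_implies_linearized_exactly_once} gives exactly one successful Line~\ref{line:g_r_cas} for $o$; I must show its time lies between $o$'s invocation (Line~\ref{line:op_start}) and its response (Line~\ref{line:op_done}). The response direction follows from Lemma~\ref{lemma:complete_implies_linearized} (the successful GCAS precedes Line~\ref{line:op_done}). For the invocation direction I would argue that the successful Line~\ref{line:g_r_cas} for $o$ is only possible after $o$ has been announced, which requires $p(o)$ to have executed Line~\ref{line:time_assignment} and Line~\ref{line:reset_help_struct} — relying on Observation~\ref{observation:g_a_is_well_formed}, Observation~\ref{observation:g_r_is_well_formed}, and the fact that $\helpobjectsymbol{}(o)$ carries $t(o)$, which is set no earlier than invocation. \textbf{(2)~Conformance to $\mathcal{T}$.} This is the heart of the proof. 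I would show by induction along the sequence $\Seq$ of successful Line~\ref{line:g_r_cas} executions that the value written into $\stateobject{}$ at each step is exactly the state reached by applying the operations in linearization order via $apply_{\mathcal{T}}$, starting from $s_0$. The crucial enabling fact is that a successful GCAS$(=,\stateobject{},(t^*,s^*,r^*,roptr^*),\dots)$ guarantees $\stateobject{}$ held $(t^*,s^*,r^*,roptr^*)$ at that instant, and the $(s',r')$ being written were computed by $apply_{\mathcal{T}}(o',s^*)$ on Line~\ref{line:apply}; so consecutive successful state transitions chain correctly, and the response recorded for $o'$ is precisely the response mandated by $\delta$ from the state just before $o'$. \textbf{(3)~Response agreement.} Finally, Lemma~\ref{lemma:response_is_from_linearization} equates each complete operation's returned response with the $r$ written by its successful Line~\ref{line:g_r_cas}, which step~(2) identifies as the $\delta$-mandated response. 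Combining, the responses in $H$ match the sequential execution in linearization order.

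For the \emph{completion} $H'$ of $H$, I would handle pending operations as follows: a pending $o$ whose successful Line~\ref{line:g_r_cas} has already occurred is completed by appending a response (namely the $r$ from that successful GCAS, which is well-defined and matches what a future Line~\ref{line:op_done} would return by Lemma~\ref{lemma:response_is_from_linearization}'s argument); a pending $o$ that was never successfully linearized is simply dropped from $H'$. I would assign linearization points only to operations in $H'$, and observe that these are exactly the operations appearing in $\Seq$ (the infinite version), so the order is total and well-defined.

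The main obstacle I expect is step~(2), establishing conformance, because it requires a clean inductive invariant tying together three separate pieces of recorded data — the timestamp/owner-pointer (used to identify \emph{which} operation each successful transition is for), the state field (used to show transitions chain via $apply_{\mathcal{T}}$), and the response field (used for response agreement) — and showing they remain mutually consistent across every successful Line~\ref{line:g_r_cas}. The subtlety is that although many processes may redundantly attempt the same state transition, only the successful GCAS executions advance $\stateobject{}$, and I must rule out a successful transition built on a \emph{stale} read of $\stateobject{}$; this is exactly what the GCAS$(=,\dots)$ semantics forbid, since a stale $s^*$ would no longer match the current contents of $\stateobject{}$ and the GCAS would fail. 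Making this invariant precise, and arguing that each $\Seq_i$ transitions from the state established by $\Seq_{i-1}$ with no gaps, will be the delicate part.
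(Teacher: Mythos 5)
Your proposal is correct and takes essentially the same approach as the paper's proof sketch: the same completion of $H$ (dropping pending operations that were never linearized, completing the linearized ones with the response written into $\stateobject{}$), the same linearization points (the unique successful execution of \cref{line:g_r_cas} for each operation, justified in-interval via \cref{lemma:complete_implies_linearized} and the announcement-after-invocation argument), and the same conformance argument via $apply_{\mathcal{T}}$ together with \cref{lemma:response_is_from_linearization}. Your explicit induction along the sequence of successful executions of \cref{line:g_r_cas} merely spells out the chaining of state transitions that the paper's sketch leaves implicit in its observation that linearization points coincide with the updates of $\stateobject{}$.
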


\begin{proofsketch}
    Recall from \Cref{sec:model} to prove this theorem we need to show that any history $H$ of \Cref{alg:wait-free-simple} has a completion $H'$ and a linearization $L$ of $H'$ that conforms to type $\mathcal{T}$.
    We do so in three steps.
    
    We first construct a completion $H'$ of $H$ as follows.
    For each incomplete operation $o$ in $H$: if \cref{line:g_r_cas} was executed for $o$ and it returns true, then it is completed in $H'$ by returning the response written into $\stateobject{}$ by the execution of \cref{line:g_r_cas} for $o$ that returned true (by \Cref{corollary:linearized_at_most_once} this response is well-defined), otherwise, $o$ is removed from $H'$.
    
    We then construct a linearization $L$ of $H'$ as follows.
    Each operation $o$ in $H'$ is linearized when \cref{line:g_r_cas} is executed for $o$ and it returns true.
    By \Cref{corollary:linearized_at_most_once} and \Cref{corollary:complete_implies_linearized_exactly_once} for each operation $o$ in $H'$ this occurs exactly once.
    Hence, the linearization point of each operation in $H'$ is well defined.
    We now justify that these linearization points occur within the execution interval of their respective operations.
    Consider an arbitrary operation $o$ in $H'$.
    Since \cref{line:g_r_cas} is only executed for operations if they are stored in $\announceobject{}$, and operations are only stored in $\announceobject{}$ after they are invoked, the linearization point of $o$ occurs after it is invoked.
    Since $o$ completes $p(o)$ executed \cref{line:op_done} at $T$.
    Hence, by \Cref{lemma:complete_implies_linearized} its linearization point is at some time before $T$.
    Therefore, $o$'s linearization point occurs within its execution interval.

    We now show that $L$ conforms to type $\mathcal{T}$.
    Let the state of an object of type $\mathcal{T}$ be the state field stored in $\stateobject{}$.
    Since the linearization point of an operation $o$ in $L$ also updates the state of an object of type $\mathcal{T}$, the linearization points are simultaneous with all the state changes.
    It's left to show that these state changes and returned responses respect the state transition relation $\delta$ of $\mathcal{T}$.
    Suppose that the state changes to $s'$ after linearizing $o$ and the response is $r'$.
    Observe that $s'$ and $r'$ are determined by the previous state $s^*$ read on \cref{line:g_r_query} followed by applying $o$ to $s^*$ on \cref{line:apply} using $apply_{\mathcal{T}}$.
    Hence, $(s^*, o, s', r') \in \delta$ (the state transition relation of type $\mathcal{T}$).
    Finally, observe that $o$ returns $r'$ because of the following.
    If $o$ is complete in $H$, by \Cref{lemma:response_is_from_linearization} $o$'s response is equal to the response stored in $\stateobject{}$ at the time $o$ is linearized (which is $r'$ in this case).
    Otherwise, if $o$ is incomplete in $H$, by the construction of $H'$ $o$'s response is the same.
    \qH{\Cref{theorem:linearizable}}
\end{proofsketch}

\section{Conclusion}

In this paper, we first proposed a natural generalization of the well-known \CAS{} object, one that replaces the equality comparison with an arbitrary comparator.
We then illustrated the use of GCAS by presenting a simple wait-free universal construction for an infinite arrival distributed system and proved its correctness.
In this universal construction, the value stored in each GCAS object has at most four fields.
Since current hardware only supports CAS objects that can store up to two fields\footnote{These objects are called \emph{Double Width CAS}. Modern Intel 64-bit processors provide this object through the \texttt{cmpxchg16b} instruction~\cite{intel}.}, the reader may wonder whether we can modify our universal construction so that each GCAS object also stores at most two fields.
The answer is yes, as we will describe in a future version of this paper.

\bibliography{main}

\end{document}